\date{\vspace{-5ex}}
\newcommand{\captionfonts}{\footnotesize}
\theoremstyle{plain}
\theoremstyle{definition}
\newtheorem{theorem}{Theorem}[section]
\newtheorem{definition}{Definition}[section]
\newtheorem{lemma}{Lemma}[section]
\theoremstyle{remark}
\long\def\@makecaption#1#2{%
  \vskip\abovecaptionskip
  \sbox\@tempboxa{{\captionfonts #1: #2}}%
  \ifdim \wd\@tempboxa >\hsize
    {\captionfonts #1: #2\par}
  \else
    \hbox to\hsize{\hfil\box\@tempboxa\hfil}%
  \fi
  \vskip\belowcaptionskip}
\begin{document}
\title{Test Models for Statistical Inference: Two-Dimensional Reaction Systems Displaying Limit Cycle Bifurcations and Bistability}
\author{Tomislav Plesa, Tom\'{a}\v{s} Vejchodsk\'{y}, Radek Erban}
  \maketitle
	\date{}
	
\begin{abstract}
Theoretical results regarding two-dimensional 
ordinary-differential equations (ODEs) with second-degree 
polynomial right-hand sides are summarized, with an emphasis on 
limit cycles, limit cycle bifurcations and multistability. 
The results are then used for construction of two reaction 
systems, which are at the deterministic level described by 
two-dimensional third-degree kinetic ODEs. The first system 
displays a homoclinic bifurcation, and a coexistence of 
a stable critical point and a stable limit cycle in the phase 
plane. The second system displays a multiple limit cycle 
bifurcation, and a coexistence of two stable limit cycles. 
The deterministic solutions (obtained by solving the kinetic ODEs) 
and stochastic solutions (noisy time-series 
generating by the Gillespie algorithm, and 
the underlying probability distributions obtained by 
solving the chemical master equation (CME)) of the constructed systems 
are compared, and the observed differences highlighted. 
The constructed systems are proposed as test problems 
for statistical methods, which are designed to detect 
and classify properties of given noisy time-series 
arising from biological applications.
\end{abstract}

\section{Introduction}\label{sec:intro}
Given noisy time-series, it may be of practical importance to 
infer possible biological mechanisms underlying the time-series~\cite{NOI1}. 
Mathematically, such statistical inferences correspond to an inverse problem,
consisting of mapping given noisy time-series to compatible reaction networks. 
One way to formulate the inverse problem is as follows. Firstly, 
one obtains deterministic kinetic ordinary-differential equations (ODEs) 
compatible with the stochastic time-series. And secondly, suitable reaction networks
may then be induced from the obtained kinetic ODEs~\cite{Me,Toth1}.
The inverse problem is generally ill-posed~\cite{Me,Toth1}, as 
 more than one suitable reaction networks may be obtained. 
In order to make a progress in solving the 
inverse problem, it is useful to impose further constraints on the kinetic ODEs.
A particular set of constraints on the kinetic ODEs may be obtained by determining the types of the 
deterministic attractors which are `hidden' in the noisy time-series~\cite{NOI1}. 
This may be a challenging task, especially when cycles (oscillations) are observed in the 
time-series. The observed cycles may be present 
in both the deterministic and stochastic models (also known at the stochastic level as 
\emph{noisy deterministic cycles}), or they may be present 
only in the stochastic model (also known as \emph{quasi-cycles}, or 
noise-induced oscillations). Noisy deterministic cycles may arise directly from 
the autonomous kinetic ODEs, or via the time-periodic terms 
present in the nonautonomous kinetic ODEs. Quasi-cycles may 
arise from the intrinsic or extrinsic noise, and have been 
shown to exist near deterministic stable foci, and stable nodes~\cite{NOI4}. For two-species 
reaction systems, quasi-cycles can be further classified 
into those that are unconditionally noise-dependent (but 
dependent on the reaction rate coefficients), and those 
that are conditionally noise-dependent~\cite{NOI4}. Thus, 
a cycle detected in a noisy time-series may at the 
deterministic level generally correspond to a stable 
limit cycle, a stable focus, or a stable node.

In order to detect and classify cycles in noisy time-series, 
several statistical methods have been suggested~\cite{NOI1,NOI2}. 
In~\cite{NOI1}, analysis of the covariance as 
a function of the time-delay, spectral analysis (the Fourier 
transform of the covariance function), and analysis of the 
shape of the stationary probability mass function, have been 
suggested. Let us note that reaction 
systems of the Lotka-Volterra ($x$-factorable~\cite{Me}) type 
are used as test models in~\cite{NOI1}, and that conditionally 
noise-dependent quasi-cycles, which can arise near a stable node, and which 
can induce oscillations in only a subset of species~\cite{NOI4}, 
have not been discussed. In addition to the aforementioned 
statistical methods developed for analysing noisy time-series, 
methods for (locally) studying the underlying stochastic 
processes near the deterministic attractors/bifurcations 
have also been developed~\cite{NOI4,Radek1,Shuohao,NOI3,NOI5,NOI6,NOI7}.

Statistical and analytical methods for studying cycles 
in stochastic reaction kinetics have often been focused 
on deterministically monostable systems which undergo a local 
bifurcation near a critical (equilibrium) point, known as the supercritical 
Hopf bifurcation. We suspect this is partially due to 
simplicity of the bifurcation, and partially due to the fact 
that it is difficult to find two-species reaction systems, 
which are more amenable to mathematical analysis, undergoing 
more complicated bifurcations and displaying bistability 
involving limit cycles. Nevertheless, 
kinetic ODEs arising from biological applications may exhibit
more complicated bifurcations and multistabilites~\cite{Intro6,Intro7,Intro8}. 
Thus, it is of importance to test the available methods 
on simpler test models that display some of the complexities 
found in the applications.

In this paper, we construct two reaction systems that are 
two-dimensional (i.e. they only include two chemical species)
and induce cubic kinetic equations, first of which undergoes 
a global bifurcation known as a convex supercritical 
homoclinic bifurcation, and which displays bistability 
involving a critical point and a limit cycle (which we 
call mixed bistability). The second system undergoes 
a local bifurcation known as a multiple limit cycle 
bifurcation, and displays bistability involving two 
limit cycles (which we call bicyclicity). Aside from 
finding an application as test models for statistical 
inference and analysis in biology, to our knowledge, the 
constructed systems are also the first examples of 
two-dimensional reaction systems displaying the 
aforementioned types of bifurcations and bistabilities.
Let us note that reaction systems with dimensions higher than two,
displaying the homoclinic bifurcation, as well as 
bistabilities involving two limit cycles, have been 
reported in applications~\cite{Intro6,Intro7,Intro8}.

The reaction network corresponding to the first system 
is given by
\begin{align}
& r_1: \;  & \varnothing  &
\xrightarrow[]{ k_1 } s_1 ,  
\; \; \; \; \; \; \; \; \; \; \; \; 
\; \; \; \; \; \; \; \; \; \; \; \; \; \; \;  
\; \; \; \; \; \; \; \, r_7:  
& 
\varnothing  &\xrightarrow[]{ k_7 } s_2,  \nonumber \\
& r_2: \;  & s_1  &
\xrightarrow[]{ k_2 } 2 s_1 ,  \; \; \; \; \; \; \; \; \; \; \; \; 
\; \; \; \; \; \; \; \; \; \; \; \; \; \; \;  \; \; \; \; \; \, r_8:  
& 
s_2  &\xrightarrow[]{ k_8 } \varnothing,  \nonumber \\
& r_3: \;  & 2 s_1  &\xrightarrow[]{ k_3 } 3 s_1, \; \; \; \; \;  
\; \; \; \; \; \; \; \; \; \; \; \; \;  \; \; \; \; \;  
\; \; \; \; \; \; \; \; \;    
\, r_9: & s_1 + s_2  &\xrightarrow[]{ k_9 } s_1 + 2 s_2, \nonumber \\
& r_4: \;  & s_1 + s_2  &\xrightarrow[]{ k_4 } s_2,  \; \; \; \; \; \; \; 
\; \; \; \; \; \; \; \; \; \; \; \; 
\; \; \; \; \; \; \; \; \;  \; \; \; \; 
\; r_{10}:  
&
2 s_2  &\xrightarrow[]{ k_{10} } 3 s_2, \nonumber \\
& r_5: \;  
& 
2 s_1 + s_2  &\xrightarrow[]{ k_5 } s_1 + s_2, \; \; \; \; 
\; \; \; \; \; \; \; \;  \; \; \; \; \;  \; \; \; \; \; \; \; \; \,  
r_{11}: 
& 3 s_2  &\xrightarrow[]{ k_{11} } 2 s_2, \nonumber \\
& 
r_{6}: \;  & s_1 + 2 s_2  &\xrightarrow[]{ k_6 } 2 s_1 + 2 s_2,  \label{eq:homoclinic1net}
\end{align} 
where the two species $s_1$ and $s_2$ react according to the eleven 
reactions $r_1, r_2, \ldots, r_{11}$ under mass-action kinetics, 
with the reaction rate coefficients denoted $k_1, k_2, \ldots, k_{11}$,
and with $\varnothing$ being the zero-species~\cite{Me}. 
A particular choice of the (dimensionless) reaction rate 
coefficients is given by
\begin{align}
k_1 & = 0.01, \; \; \; 
k_2 = 0.9, \; \; \; 
k_3 = 1.55, \; \; \; 
k_4 = 2.6, \; \; \; 
k_5 = 1.2, \; \; \; 
k_6= 1.5, \nonumber \\
k_7 & = 0.01, \; \; \; 
k_8 = 3.6, \; \; \; 
k_9 = 1, \; \; \; 
k_{10} = 2.4, \; \; \; 
k_{11} = 0.8,
\label{eq:homoclinic1example}
\end{align}
while more general conditions on these parameters are derived
later as equations~(\ref{eq:homoclinic1coefficients}) and~(\ref{eq:homoclinic1parameters}).

The reaction network corresponding to the second system 
includes two species $s_1$ and $s_2$ which are subject
the following fourteen chemical reactions $r_1, r_2, \ldots, r_{14}$:
\begin{align}
& r_1: \;  & \varnothing  &
\xrightarrow[]{ k_1 } s_1 ,  \; \; \; \; \; \; \; \; \; \; \; \; 
\; \; \; \; \; \; \; \; \; \; \; \; \; \; \;  \; \; \; \; \; \; \, r_8:  & 
\varnothing  &\xrightarrow[]{ k_8 } s_2,  \nonumber \\
& r_2: \;  & s_1  &\xrightarrow[]{ k_2 } \varnothing, \; \; \; \; \;  
\; \; \; \; \; \; \; \; \; \; \; \; \;  \; \; \; \; \;  \; \; \; \; \; \; \; \; \; \; \; r_9: &  s_2  &\xrightarrow[]{ k_9 } 2 s_2, \nonumber \\
& r_3: \;  & 2 s_1  &\xrightarrow[]{ k_3 } 3 s_1,  \; \; \; \; \; \; \; 
\; \; \; \; \; \; \; \; \; \; \; \; \; \; \; \; \; \; \; \; \;  \; \; \, r_{10}:  & s_1 + s_2  &\xrightarrow[]{ k_{10} } s_1, \nonumber \\
& r_4: \;  & s_1 + s_2  &\xrightarrow[]{ k_4 } 2 s_1 + s_2, \; \; \; \; 
\; \; \; \; \; \; \; \;  \; \; \; \; \;  \; \; \; \; \; \;   
r_{11}: & 2 s_2  &\xrightarrow[]{ k_{11} } 3 s_2, \nonumber \\
& r_5: \;  & 3 s_1  &\xrightarrow[]{ k_5 } 4 s_1,\; \; \; \; 
\; \; \; \; \; \; \; \;  \; \; \; \; \;  \; \; \; \; \; \; \; \;   
\; \; \; \; \; \; r_{12}: & 2 s_1 + s_2  &\xrightarrow[]{ k_{12} } 2 s_1 + 2 s_2, \nonumber \\
& r_6: \;  & 2 s_1 + s_2  &\xrightarrow[]{ k_6 } s_1 + s_2, \; \; \; \; 
\; \; \; \; \; \; \; \;  \; \; \; \; \;  \; \; \; \; \; \; \; \;   
r_{13}: & s_1 + 2 s_2  &\xrightarrow[]{ k_{13} } s_1 +  s_2, \nonumber \\
& r_7: \;  & s_1 + 2 s_2  &\xrightarrow[]{ k_7 } 2 s_2,\; \; \; \; 
\; \; \; \; \; \; \; \;  \; \; \; \; \;  \; \; \; \; \; \; \; \;   
\; \; \; \; \; \; r_{14}: & 3 s_2  &\xrightarrow[]{ k_{14} } 2 s_2, \label{eq:bicyclicXT2net}
\end{align} 
where $k_1, k_2, \ldots, k_{14}$ are the corresponding reaction 
rate coefficients. A particular choice of the 
(dimensionless) reaction coefficients is given by \footnote{Let us note that the limit cycles corresponding to~(\ref{eq:bicyclicXT2net}) are \emph{highly} sensitive to changes in the parameters~(\ref{eq:bicyclicexample}). Thus, during numerical simulations, parameters~(\ref{eq:bicyclicexample}) should \emph{not} be rounded-off. One can also design bicyclic systems which are less parameter sensitive, see Appendix~\ref{app:constructions3}.}
\begin{align}
k_1 & = 2 \times 10^{-7}, \; \; \; 
k_2 = 19.987880407, \; \; \; 
k_3 = 0.019944378, \nonumber \\ 
k_4 & = 0.02003132232, \; \; \; 
k_5 = 2.9 \times 10^{-8}, \; \; \; 
k_6 = 2.000232 \times 10^{-5}, \nonumber \\
k_7 & = 1.45 \times 10^{-8}, \; \; \; 
k_8 = 2 \times 10^{-7}, \; \; \; 
k_9 = 8.38734, \; \; \; 
k_{10} = 0.038389, \nonumber \\
k_{11} & = 0.0215726,  \; \; \; 
k_{12} = 2 \times 10^{-5}, \; \; \; 
k_{13} = 1.571 \times 10^{-6}, \; \; \; 
k_{14} = 10^{-5}, \label{eq:bicyclicexample}
\end{align}
while the general conditions on these parameters are given later
as equations~(\ref{eq:bicyclicXT2coefficients}) and~(\ref{eq:bicyclicXT2parameters}).

In Figure~\ref{fig:introduction}, we display a representative 
noisy-time series generated using the Gillespie stochastic 
algorithm, in Figure~\ref{fig:introduction}(a) for the one-dimensional cubic 
Schl\"{o}gl system~\cite{Schlogl}, which deterministically 
displays two stable critical points (bistationarity~\cite{Toth1}), 
in Figure~\ref{fig:introduction}(b) for the reaction 
network~(\ref{eq:homoclinic1net}) 
with coefficients~(\ref{eq:homoclinic1example}), which 
deterministically displays a stable critical point and 
a stable limit cycle (mixed bistability), and in 
Figure~\ref{fig:introduction}(c) 
for the reaction network~(\ref{eq:bicyclicXT2net}) 
with coefficients~(\ref{eq:bicyclicexample}), which deterministically 
displays two stable limit cycles (bicyclicity).
Several statistical challenges arise. For example, is it possible 
to infer that the upper attractor in Figure~\ref{fig:introduction}(b) 
is a deterministic critical point, while the lower 
a noisy limit cycle? Is it possible to detect one/both 
noisy limit cycles in Figure~\ref{fig:introduction}(c)? 
The answer to the second question is complicated by the fact
that the two deterministic limit cycles in 
Figure~\ref{fig:introduction}(c) are relatively close to each 
other.

\begin{figure}[tb]
\centerline{
\hskip 0mm
\includegraphics[width=0.5\columnwidth]{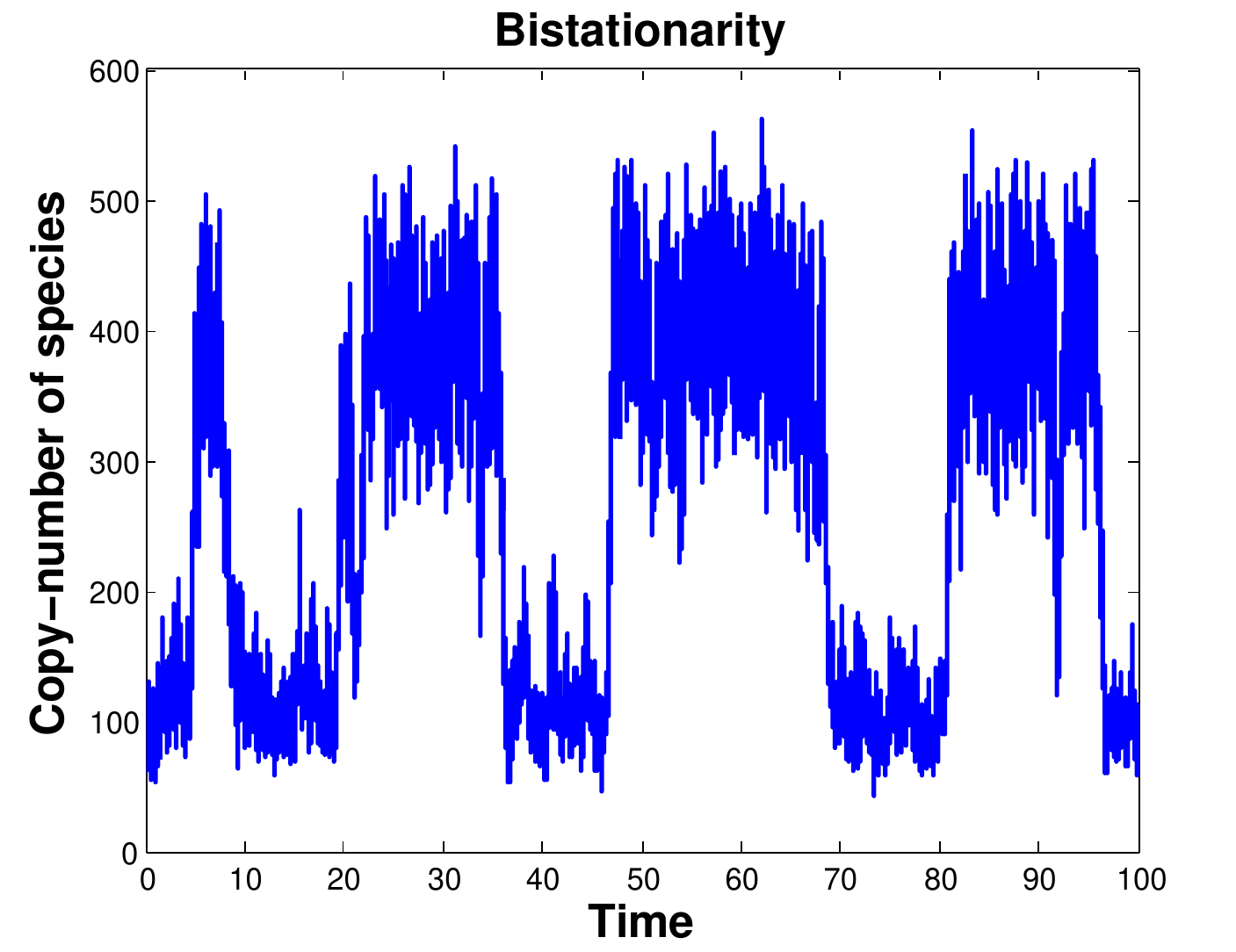}
\hskip 1mm
\includegraphics[width=0.5\columnwidth]{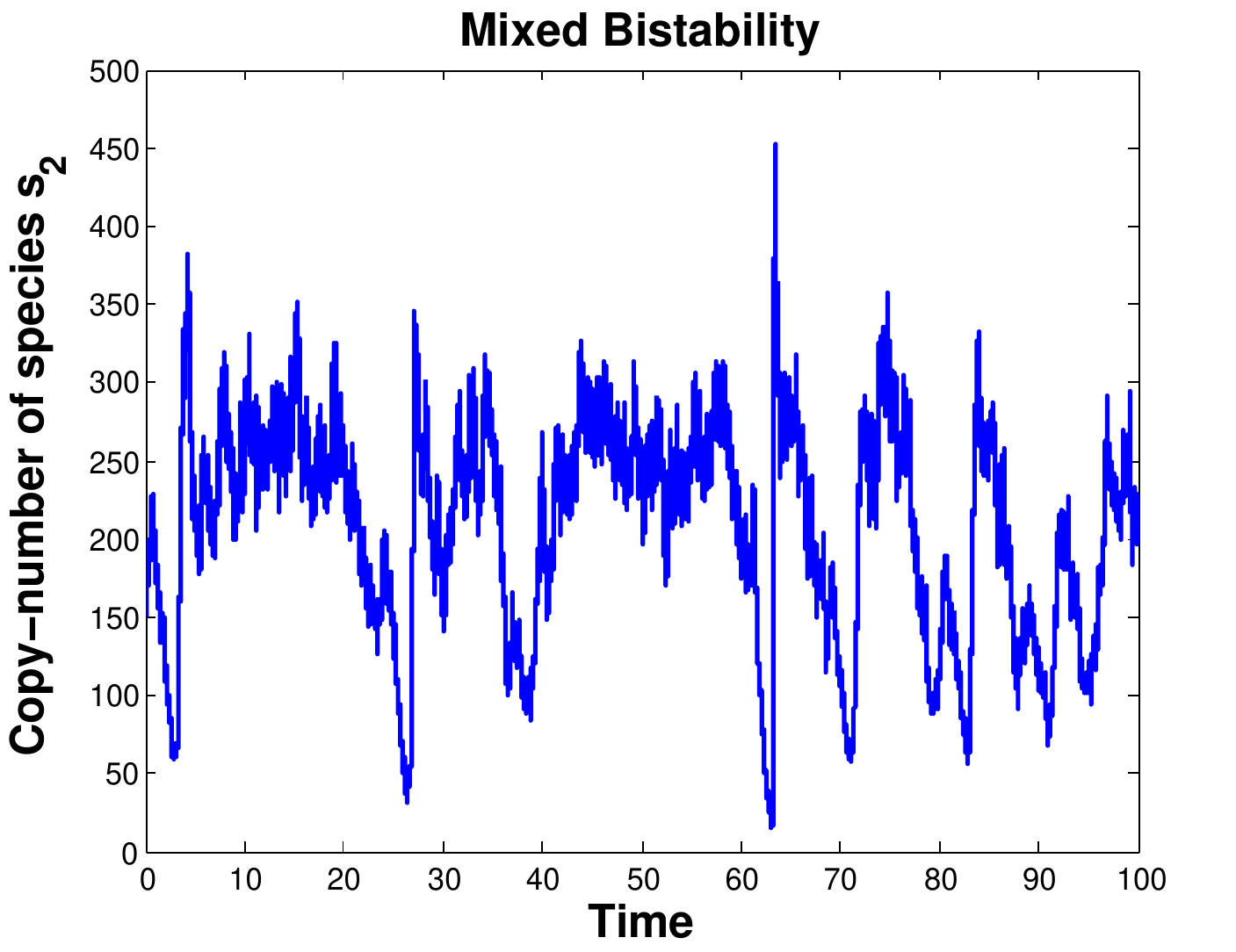}
}
\vskip -5.1cm
\leftline{\hskip -0.3cm (a) \hskip 5.9cm (b)}
\vskip 4.8cm
\centerline{
\hskip -2mm
\includegraphics[width=0.5\columnwidth]{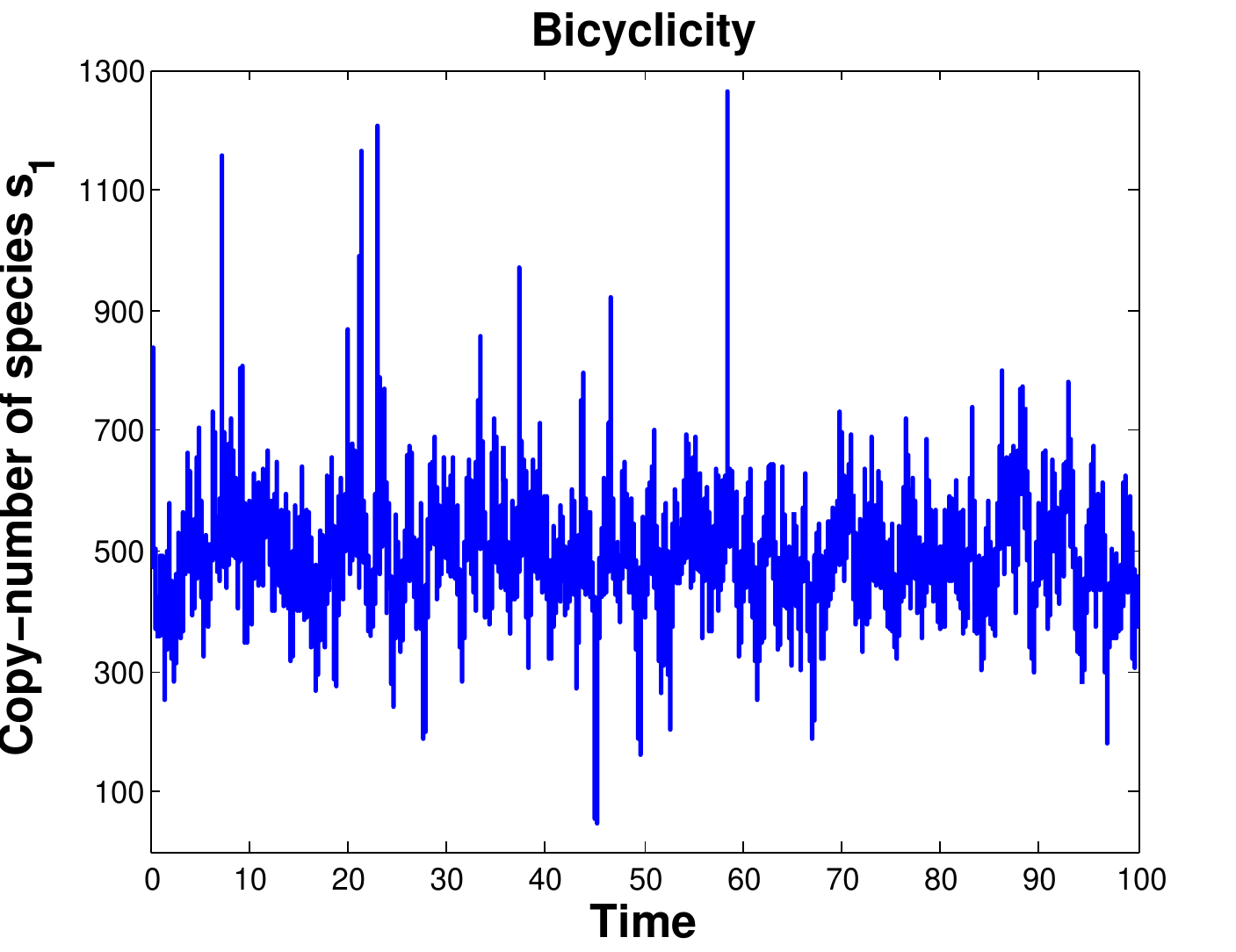}
}
\vskip -5.1cm
\leftline{\hskip 2.8cm (c)}
\vskip 4.3cm
\caption{ 
{\it Panels {\rm (a)}, {\rm (b)} and {\rm (c)} show 
representative sample paths generated using the Gillespie 
stochastic simulation algorithm for the Schl\"{o}gl 
system~{\rm \cite{Schlogl}} with coefficients as in~{\rm \cite{Radek1}}, 
reaction network~$(\ref{eq:homoclinic1net})$ 
with coefficients~$(\ref{eq:homoclinic1example})$ and reactor volume $V = 100$, and 
reaction network~$(\ref{eq:bicyclicXT2net})$ with
coefficients~$(\ref{eq:bicyclicexample})$ and $V = 0.5$, respectively. 
At the deterministic level, the phase planes 
of~$(\ref{eq:homoclinic1net})$ and~$(\ref{eq:bicyclicXT2net})$ 
are shown in Figure~$\ref{fig:phaseplanes}$. The 
deterministic and stochastic time-series,
as well as the probability distributions, are shown in 
Figures~$\ref{fig:homoclinic}$ and~$\ref{fig:nonconcentricbicyclic}$. 
At the deterministic level, a critical point and 
a limit cycle are `hidden' in {\rm (b)}, while two limit 
cycles are `hidden' in {\rm (c)}. 
}
}
\label{fig:introduction}
\end{figure}

The rest of the paper is organized as follows. 
In Section~\ref{sec:properties}, we outline properties 
of the planar quadratic ODE systems, concentrating on 
cycles, cycle bifurcations and multistability. 
There are two reasons for focusing on the planar 
quadratic systems: firstly, the phase plane theory 
for such systems is well-developed~\cite{Gaiko1,Gaiko2}, with 
a variety of concrete examples with interesting 
phase plane configurations~\cite{LC1,Cherkas,Artes}. Secondly, an arbitrary 
planar quadratic ODE system can always be mapped to a kinetic 
one using only an affine transformation - a special property 
not shared with cubic (nor even linear) planar 
systems~\cite{Escher1}. This, together with the available 
nonlinear kinetic transformations which increase the 
polynomial degree of an ODE system by one~\cite{Me}, imply 
that we may map a general planar quadratic system to at most 
cubic planar kinetic system, which may still be biologically 
or chemically relevant. In Section~\ref{sec:constructions}, 
we present the two planar cubic test models which induce 
reaction networks~(\ref{eq:homoclinic1net}) and~(\ref{eq:bicyclicXT2net}), 
and which are constructed starting from suitable planar 
quadratic ODE systems. We also compare the deterministic 
and stochastic solutions of the constructed reaction networks,
and highlight the observed qualitative differences. Finally, 
in Section~\ref{sec:summary}, we provide a summary of the paper.

\section{Properties of two-dimensional second-degree \break polynomial 
ODEs: cycles, cycle bifurcations and multistability}
\label{sec:properties}
Let us consider the two-dimensional second-degree autonomous 
polynomial ODEs 
\begin{align}
\frac{\mathrm{d} x_1}{\mathrm{d} t} 
& = \mathcal{P}_1(x_1,x_2; \, \mathbf{k})  
= k_{1} + k_{2} x_1 + k_{3} x_2 + k_{4}x_1^2 + k_{5} x_1 x_2 + k_{6} x_2^2,
\nonumber \\
\frac{\mathrm{d} x_2}{\mathrm{d} t} 
& = \mathcal{P}_2(x_1,x_2; \, \mathbf{k}) 
= k_{7} + k_{8} x_1 + k_{9} x_2 + k_{10}x_1^2 + k_{11} x_1 x_2 + k_{12} x_2^2,
\label{eq:polynomial}
\end{align}
where 
$\mathcal{P}_i (\,\cdot\,,\,\cdot\,; \, \mathbf{k}): 
\mathbb{R}^2 \to \mathbb{R}$, $i \in \{1,2\},$ 
are the second-degree two-variable polynomial functions, and 
$\mathbf{k} = (k_1, k_2, \dots, k_{12}) \in \mathbb{R}^{12}$ 
is the vector of the corresponding coefficients. 
We assume that $\mathcal{P}_1$ and $\mathcal{P}_2$ are relatively 
prime and at least one is of second-degree. We allow 
coefficients $\mathbf{k}$ to be parameter-dependent,  
$\mathbf{k} = \mathbf{k}(\mathbf{p})$, with 
$\mathbf{p} \in \mathbb{R}^{q}$, $q \ge 0$.

Let us consider two additional properties which system~(\ref{eq:polynomial}) may satisfy:
\begin{enumerate}
\item [(I)]  Coefficients $k_1,$ $k_3,$ $k_6,$ $k_7,$ $k_8,$ $k_{10} \ge 0$,
i.e. $\mathcal{P}_1$ and $\mathcal{P}_2$ are so-called kinetic functions 
(for a rigorous definition see~\cite{Me}). 
\item [(II)] The species concentrations $x_1 = x_1(t)$ and $x_2 = x_2(t)$ are uniformly 
bounded in time for $t \ge 0$ in the nonnegative orthant 
$\mathbb{R}_{\ge}^2$, except possibly for initial conditions 
located on a finite number of one-dimensional subsets of
$\mathbb{R}_{\ge}^{2}$, where infinite-time blow-ups are 
allowed.
\end{enumerate}
The subset of equations~(\ref{eq:polynomial}) satisfying 
properties (I)--(II) are referred to as the 
\emph{deterministic kinetic equations} bounded 
in $\mathbb{R}_{\ge}^2$, and denoted
\begin{align}
\frac{\mathrm{d} x_1}{\mathrm{d} t} 
& = \mathcal{K}_1(x_1,x_2; \, \mathbf{k}(\mathbf{p})), \nonumber \\
\frac{\mathrm{d} x_2}{\mathrm{d} t} 
& = \mathcal{K}_2(x_1,x_2; \, \mathbf{k}(\mathbf{p})). 
\label{eq:kinetic}
\end{align}
In what follows, we discuss only the biologically/chemically
relevant solutions of~(\ref{eq:kinetic}), i.e.
the solutions in the nonnnegative quadrant $\mathbb{R}_{\ge}^2$.
We now summarize some of the definitions and results 
regarding cycles, cycle bifurcations and multistability
(referred to as the so-called exotic phenomena in the 
biological context~\cite{Toth1}) 
for systems~(\ref{eq:polynomial}) and~(\ref{eq:kinetic}). 
Let us note that most of the results have been shown to hold only
for the more general system~(\ref{eq:polynomial}), and may not necessarily hold
for the more restricted system~(\ref{eq:kinetic}).

\emph{Critical points}. A (finite) critical point $(x_1^*(\mathbf{k}),x_2^*(\mathbf{k}))$ 
of system~(\ref{eq:polynomial}) is a solution of the polynomial system 
$\mathcal{P}_1(x_1^*,x_2^*; \, \mathbf{k}) = 0, \mathcal{P}_2(x_1^*,x_2^*; \, \mathbf{k}) = 0$.
Critical points are the time-independent solutions of~(\ref{eq:polynomial}).

\emph{Cycles}. Cycles of~(\ref{eq:polynomial}) are 
closed orbits in the phase plane which are not critical points. 
They can be isolated (limit cycles, and separatrix cycles) or nonisolated 
(a one-parameter continuous family of cycles). Limit cycles 
are the periodic solutions of~(\ref{eq:polynomial}). 
A homoclinic separatrix cycle consists of 
a homoclinic orbit and a critical point of saddle type, with the orbit 
connecting the saddle to itself. On the other hand, a heteroclinic separatrix cycle consists 
of two heteroclinic orbits, and two critical points,
with the orbits connecting the two critical points~\cite{Perko1}. 
Limit cycles of~(\ref{eq:kinetic}) correspond to biological clocks, 
which play an important role in fundamental biological processes, such as the cell cycle, 
the glycolytic cycle and circadian rhythms~\cite{Intro2,Intro3,Intro4}. 

\emph{Cycle bifurcations}. Variations of coefficients $\mathbf{k}$
in~(\ref{eq:polynomial}) may lead to changes in the topology 
of the phase plane (e.g. a change may occur in the number of 
invariant sets or their stability, shape of their region of 
attraction or their relative position). Variation of $\mathbf{k}(\mathbf{p})$ in~(\ref{eq:kinetic}) may be 
interpreted as a variation of the reaction rate 
coefficients $\mathbf{k}$ due to changes in the reactor 
(environment) parameters $\mathbf{p}$, such as the pressure 
or temperature. If the variation causes the system to 
become topologically nonequivalent, such a parameter is 
called a bifurcation parameter, and at the parameter value 
where the topological nonequivalence occurs, a bifurcation is 
said to take place~\cite{Bifur1,Perko1}. Bifurcations in 
the deterministic kinetic equations have been reported in
applications~\cite{Intro2,Intro3,Intro4,Intro1,Intro5,Intro6}.

Bifurcations of limit cycles of~(\ref{eq:polynomial}) 
can be classified into three categories: (i) the Andronov-Hopf 
bifurcation, where a limit cycle is created from a critical point 
of focus or center type, (ii) the separatrix cycle bifurcation, 
where a limit cycle is created from a separatrix cycle, 
and (iii) the multiple limit cycle bifurcation, where a limit 
cycle is created from a limit cycle of multiplicity greater 
than one~\cite{Gaiko1,Perko1}. Let us note that the maximum multiplicity 
of a multiple focus of~(\ref{eq:polynomial}) is 
three, so that at most three local limit cycles can be created 
under appropriate perturbations~\cite{Bautin}. Bifurcations (i) and (iii) are 
examples of local bifurcations, occurring in a neighbourhood 
of a critical point or a limit cycle, while bifurcations (ii) are 
examples of global bifurcations, occuring near a separatrix cycle. 
The following global bifurcations may occur in~(\ref{eq:polynomial}): 
convex homoclinic bifurcations (defined in e.g.~\cite{Homoclinic}),
saddle-saddle (heteroclinic) bifurcations, and the 
saddle-node (heteroclinic) bifurcations on an invariant cycle.
However, concave homoclinic bifurcations, double 
convex, and double concave homoclinic bifurcations, presented in e.g.~\cite{Homoclinic}, 
cannot occur in~(\ref{eq:polynomial}) as a consequence of 
basic properties of planar quadratic ODEs~\cite{LC2,LC3}. 

A necessary condition for the existence of a limit cycle in~(\ref{eq:kinetic}) 
is that $k_4 > 0$ or $k_{12} > 0$~\cite{Me,Toth1}. 
This implies that the induced reaction 
network must contain at least one autocatalytic reaction
of the form $2 s_i \to n s_i + m s_j$, with $n \ge 3$, 
$m \ge 0$, and $i, j \in \{1,2\}$. In the literature,
system~(\ref{eq:kinetic}) has been shown to display the 
following limit cycle bifurcations: Andronov-Hopf bifurcations, 
saddle-node on an invariant cycle, and multiple limit cycle 
bifurcations~\cite{Escher1,Escher2,Escher3}. 
Let us note that some of the reaction systems constructed 
in~\cite{Escher1,Escher2,Escher3}
(e.g. displaying double Andronov-Hopf bifurcation, 
and a saddle-saddle bifucation) are described by 
ODEs of the form~(\ref{eq:kinetic}), but with solutions 
which are generally not bounded in $\mathbb{R}_{\ge}^2$.

\emph{Multistability}. System~(\ref{eq:polynomial}) is 
said to display multistability if the total number of the underlying stable critical points 
and stable limit cycles is greater than one, for a fixed $\mathbf{k}$.
Multistability in~(\ref{eq:kinetic}) corresponds to biological switches, 
which may be classified into reversible or irreversible~\cite{Multistationarity1,Multistationarity2,Intro1}. 
The former switches play an important role in reversible biological 
processes (e.g. metabolic pathways dynamics, and reversible 
differentiation), while the latter in irreversible biological 
processes (e.g. developmental transitions, and apoptosis). 

Multistability can be mathematically classified into 
\emph{pure multistability}, involving attractors 
of only the same type (either only stable critical points, 
or only stable limit cycles), and \emph{mixed multistability}, 
involving at least one stable critical point, and at least 
one stable limit cycle. Pure multistability involving only critical 
points is called \emph{multistationarity}~\cite{Toth1}, while 
we call pure multistability involving only limit cycles 
\emph{multicyclicity}. Mixed bistability, and bicyclicity, 
can be further classified into concentric and nonconcentric. 
Concentric mixed bistability (resp. bicyclicity) occurs when 
the stable limit cycle encloses the stable critical point 
(resp. when the first stable limit cycle encloses the second
stable limit cycle), while nonconcentric when this is not 
the case. Let us note that, for a fixed kinetic ODE system~(\ref{eq:kinetic}), 
multistationarity at some parameter values $\mathbf{k}$, is neither necessary, nor 
sufficient, for cycles at some (possibly other) parameter 
values $\mathbf{k}'$~\cite{Tothosci}.

We now prove that~(\ref{eq:polynomial}) can have at most 
three coexisting stable critical points, i.e.~(\ref{eq:polynomial}) 
can be at most \emph{tristationary}.

\begin{lemma} \label{lemma:tristationarity}
\textit{The maximum number of coexisting stable critical points 
in two-dimensional relatively prime second-degree polynomial 
ODE systems~$(\ref{eq:polynomial})$, with fixed coefficients 
$\mathbf{k}$, is three.}
\end{lemma}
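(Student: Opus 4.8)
The plan is to bound first the number of critical points, and then the number of them that can be \emph{antisaddles}, since a stable critical point is necessarily an antisaddle. First I would count critical points via B\'ezout's theorem. The critical points are the common zeros of the two real conics $\mathcal{P}_1$ and $\mathcal{P}_2$; because these are relatively prime (hence share no common component) and their degrees multiply to at most four, their total intersection multiplicity in the complex projective plane is at most four. Consequently there are at most four distinct finite real critical points, and if there are exactly four distinct ones then each is simple and no critical points lie at infinity or have complex coordinates. If there are three or fewer distinct finite real critical points the assertion is immediate, since each stable point is a distinct critical point; so the whole difficulty is concentrated in the case of exactly four simple real finite critical points.

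Next I would record that a stable critical point has Poincar\'e index $+1$. At such a point the Jacobian $J$ of $(\mathcal{P}_1,\mathcal{P}_2)$ has eigenvalues with negative real part, so $\det J > 0$ and the index equals $\operatorname{sgn}\det J = +1$; thus a stable point is an antisaddle, never a saddle. (A hypothetical stable point with $\det J = 0$ would be a non-simple intersection of the two conics, raising the intersection multiplicity and leaving at most three distinct finite critical points, which is again covered by the easy branch above.) It therefore suffices to show that among four simple critical points at least one is a saddle, i.e.\ that four coexisting antisaddles are impossible.

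The core of the argument is an alternation-of-index computation along a nullcline. Each partial derivative of $\mathcal{P}_i$ is affine, so $\det J = \partial_{x_1}\mathcal{P}_1\,\partial_{x_2}\mathcal{P}_2 - \partial_{x_2}\mathcal{P}_1\,\partial_{x_1}\mathcal{P}_2$ is a polynomial of degree at most two. Restricting to the conic $C_1 = \{\mathcal{P}_1 = 0\}$ and orienting it by the tangent vector $(-\partial_{x_2}\mathcal{P}_1,\partial_{x_1}\mathcal{P}_1)$, a short computation gives $\det J = |\nabla \mathcal{P}_1|\,\mathrm{d}\mathcal{P}_2/\mathrm{d}s$ with $s$ the arclength along $C_1$; hence the index at a critical point equals the sign of the rate of change of $\mathcal{P}_2$ as one moves along $C_1$. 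Along a connected arc of $C_1$ the restriction of $\mathcal{P}_2$ vanishes and changes sign transversally at each simple critical point, so between two consecutive zeros at which $\mathcal{P}_2$ is increasing there must lie one at which it is decreasing. The indices therefore alternate as one traverses $C_1$, which forces at least one saddle among the four points and yields the bound three.

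The step I expect to be the main obstacle is closing this alternation argument rigorously, because a nondegenerate conic is a closed loop only in the projective plane $\mathbb{RP}^2$, and $\mathcal{P}_2$ is not globally single-valued there; when $C_1$ is a hyperbola or a parabola one must track the sign of $\mathcal{P}_2$ as the curve passes through its points at infinity, and the bookkeeping splits according to whether the four points do or do not form a convex quadrilateral. This is exactly the content of Berlinskii's classical theorem for planar quadratic systems: four real finite critical points comprise either two saddles and two antisaddles (convex case) or, in the non-convex case, one saddle with three antisaddles or three saddles with one antisaddle. In every configuration at least one of the four points is a saddle, so at most three are antisaddles and hence at most three are stable; the non-convex subcase with a single saddle shows the bound three is attained. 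I would therefore either invoke Berlinskii's theorem directly or carry out the projective sign analysis sketched above to establish the single fact I need, namely that four coexisting antisaddles cannot occur.
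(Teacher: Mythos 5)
Your upper-bound argument is sound and takes a genuinely different route from the paper. The paper normalizes the four-critical-point configuration by a centroaffine change of variables to the canonical form~(\ref{eq:tristationary}) with critical points at $(0,0)$, $(1,0)$, $(0,1)$, $(\alpha,\beta)$, computes the traces and determinants~(\ref{eq:tristationaryjacob}) explicitly, and reads off from the determinant relations (equivalently, from the Berlinskii-type results it cites) that three stable points force the nonconvex configuration with the saddle at the interior vertex. Your route --- B\'ezout to cap the count at four, the observation that a stable point is an antisaddle with $\det J>0$ (with degenerate intersections correctly folded back into the easy branch), and the index-alternation along the nullcline $\{\mathcal{P}_1=0\}$ culminating in Berlinskii's theorem --- establishes the same bound more conceptually and with less computation; the paper's normal form, by contrast, hands it explicit trace and determinant formulas that it then needs for the second half of the proof.

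That second half is where your proposal has a genuine gap. The lemma asserts the maximum \emph{is} three, not merely that it is at most three, so you must exhibit a quadratic system with three \emph{coexisting stable} critical points. Your closing sentence infers attainability from the nonconvex Berlinskii subcase with a single saddle, but that subcase only guarantees three antisaddles, and an antisaddle may be an unstable node, an unstable focus, or a center; Berlinskii's theorem constrains the signs of the determinants and says nothing about the traces. The paper closes exactly this point by additionally imposing $\tau_A<0$, $\tau_B<0$, $\tau_C<0$ in the normal form and producing an explicit solution of the resulting system of inequalities ($a_1=1$, $b_1=-1$, $a_2=1$, with $\alpha$, $\beta$, $b_2$ in suitable ranges). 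You would need an analogous explicit construction, or at least an argument that the traces at the three antisaddles can be made simultaneously negative without disturbing the sign pattern of the determinants.
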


\begin{proof}
Let us assume system~(\ref{eq:polynomial}) has four, 
the maximum number, of real finite critical points. Then, using 
an appropriate centroaffine (linear) transformation~\cite{LC2,LC3}, system~(\ref{eq:polynomial}) can be mapped to 
\begin{align}
\frac{\mathrm{d} x_1}{\mathrm{d} t} 
& = a_1 x_1 (x_1 - 1) + b_1 x_2 (x_2 - 1) + c_1 x_1 x_2, \nonumber \\
\frac{\mathrm{d} x_2}{\mathrm{d} t} 
& = a_2 x_1 (x_1 - 1) + b_2 x_2 (x_2 - 1) + c_2 x_1 x_2, \label{eq:tristationary}
\end{align}
which is topologically equivalent to~(\ref{eq:polynomial}), with 
the critical points located at $A = (0,0)$, $B = (1,0)$, $C = (0,1)$ 
and $D = (\alpha,\beta)$, with $\alpha \ne 0$, $\beta \ne 0$, 
$\alpha + \beta \ne 1$, and the coefficients $c_1, c_2$ given 
by
\begin{align}
c_1 
& = - \frac{\alpha - 1}{\beta} a_1 - \frac{\beta - 1}{\alpha} b_1, 
\nonumber \\
c_2 
& = - \frac{\alpha - 1}{\beta} a_2 - \frac{\beta - 1}{\alpha} b_2. 
\nonumber
\end{align}
The trace and determinant of the Jacobian matrix 
of~(\ref{eq:tristationary}), denoted $\tau$ and $\delta$, 
respectively, evaluated at the four critical points, $A,B,C,D$, 
are given by:
\begin{align}
\tau_A 
& = - (a_1 + b_2), 
& \delta_A \, = \, a_1 b_2 - a_2 b_1, \nonumber \\
\tau_B 
& = a_1 
- a_2 \frac{(\alpha - 1)}{\beta} - b_2\frac{(\alpha + \beta - 1)}{\alpha}, 
& \delta_B \, = \, - \frac{\alpha + \beta - 1}{\alpha} \delta_A, 
\nonumber \\
\tau_C 
& = b_2 - a_1\frac{(\alpha + \beta - 1)}{\beta} - b_1\frac{(\beta - 1)}{\alpha},
& \delta_C \, = \, - \frac{\alpha + \beta - 1}{\beta} \delta_A, 
\nonumber \\
\tau_D 
& = \alpha a_1 + \beta b_2 
- a_2 \frac{\alpha (\alpha - 1)}{\beta} - b_1\frac{\beta (\beta-1)}{\alpha}, 
& \delta_D \, = \, (\alpha + \beta - 1) \delta_A.  
\label{eq:tristationaryjacob} 
\end{align}
System~(\ref{eq:tristationary}) may have three stable 
critical points if and only if the quadrilateral 
$A B C D$, formed by the critical points, is nonconvex, 
and the only saddle critical point is the one located 
at the interior vertex of the quadrilateral~\cite{LC2,LC3}. 
This is the case when $\alpha > 0$, $\beta > 0$, 
$\alpha + \beta < 1$, and $\delta_A > 0$, in which case $A,$ $B,$ 
and $C$ are nonsaddle critical points, while $D$ is a saddle. 
Imposing also the conditions $\tau_A < 0$, $\tau_B < 0$, 
$\tau_C < 0$, ensuring that $A,$ $B,$ and $C$ are stable, 
a solution of the resulting system of algebraic inequalities 
is given by 
$a_1 = 1$, 
$b_1 = -1$, 
$a_2 = 1$, 
$0 < \alpha < 1/2 \left( (1 + 2 \beta) - \sqrt{1 + 8 \beta^2} \right)$, 
$-1 < b_2 < \alpha(- \alpha + \beta + 1)/(\beta (\alpha + \beta -1))$.
\end{proof}
Let us note that if~(\ref{eq:tristationary}) is kinetic, then it 
cannot have three stable critical points. More precisely, 
requiring $b_1 \ge 0$, $a_2 \ge 0$, and $d_A > 0$ and 
$\tau_A < 0$ in~(\ref{eq:tristationaryjacob}), implies $a_1 > 0$ 
and $b_2 > 0$, which further implies $\tau_B >0$, so that $B$ 
is unstable. More generally, the authors have not found a tristationary 
system~(\ref{eq:kinetic}) in the literature (and we conjecture it does not exist).  
On the other hand, bistationary systems~(\ref{eq:kinetic}) do exist
(in fact, even one-dimensional cubic bounded kinetic systems may be 
bistationary, e.g. the Schl\"{o}gl 
model~\cite{Schlogl}, see the time-series shown in 
Figure~\ref{fig:introduction}(a)).
 
The maximum number of stable limit cycles in~(\ref{eq:polynomial}) 
is two, i.e.~(\ref{eq:polynomial}) can be at most \emph{bicyclic}. 
Furthermore, system~(\ref{eq:polynomial}) may also 
display \emph{mixed tristability}, involving one 
stable critical point, and two stable limit cycles. 
This follows from the fact that the maximum number of limit 
cycles in~(\ref{eq:polynomial}) is four, in the unique 
configuration $(3,1)$, a fact only recently proved 
in~\cite{Gaiko2}, solving the second part of Hilbert's 16th 
problem for the quadratic case. If the solutions of~(\ref{eq:polynomial}) are required
to be bounded in the whole $\mathbb{R}^2$, system~(\ref{eq:polynomial}) was conjectured
to have at most two limit cycles~\cite{Perko1,boundedLCs}, and hence have at most one stable 
limit cycle. It remains an open problem if the maximum number of limit cycles 
in the nonnegative orthant of~(\ref{eq:kinetic}) is four or less (we conjecture it is less than four),
and if~(\ref{eq:kinetic}) may be bicyclic. Due to the fact that~(\ref{eq:kinetic}) is (I) 
kinetic (and, hence, nonnegative), and (II) appropriately bounded in $\mathbb{R}_{\ge}^2$, 
additional restrictions are imposed on the boundary
of $\mathbb{R}_{\ge}^2$, and on the critical points
at infinity, complicating the construction of systems~(\ref{eq:kinetic}) 
displaying multistability involving limit cycles. Some results regarding multistability have been obtained
in~\cite{Escher1}: system~(\ref{eq:kinetic}) displaying concentric mixed bistability has been constructed.
The system contains two limit cycles in the nonnegative orthant, and therefore does not exceed 
the conjectured bound on the number of limit cycles in the bounded quadratic systems~\cite{Perko1,boundedLCs}.
While a kinetic system of the form~(\ref{eq:kinetic}) displaying 
concentric bicyclicity has been obtained in~\cite{Escher1}, the system is not bounded in $\mathbb{R}_{\ge}^2$.

\section{Test models: construction and simulations}\label{sec:constructions}
In this section, our aim is to construct two-dimensional kinetic ODEs
bounded in $\mathbb{R}_{\ge}^2$, which display a nonconcentric bistability. 
As highlighted in the previous section,
it may be a difficult task to obtain such systems with at most quadratic terms, 
i.e. in the form~(\ref{eq:kinetic}). To make a progress, in this section,
we allow the two-dimensional kinetic ODEs to contain cubic terms, and we construct two systems.
The first system displays 
a convex homoclinic bifurcation, and mixed bistability, and 
is obtained by modifying a system from~\cite{Me} using the 
results from Appendix~\ref{app:xfactorable}. The second system 
displays a multiple limit cycle bifurcation, and bicyclicity. 
To construct the second system, we use an existing system of 
the form~(\ref{eq:polynomial}), which forms a one-parameter 
family of uniformly rotated vector fields~\cite{LC4,Perko1}, 
and which displays bicyclicity and multiple limit cycle 
bifurcation~\cite{Tung}. We use kinetic transformations 
from~\cite{Me} to map this system, which is of the 
form~(\ref{eq:polynomial}), to a kinetic one, which is of 
the form~(\ref{eq:kinetic}). We then use the results from Appendix~\ref{app:xfactorable}
to map the system of the form~(\ref{eq:kinetic}) to a suitable cubic 
two-dimensional kinetic system.
We also fine-tune the polynomial coefficients in the kinetic ODEs
in such a way that sizes of the two stable limit cycles differ 
by maximally one order of magnitude (a task that can pose 
challenges~\cite{LC1}). As differences may be observed between 
the deterministic and stochastic solutions for parameters 
at which a deterministic bifurcation occurs~\cite{Radek1}, 
we investigate the constructed models for such 
observations. Let us note that an alternative static
(i.e. not dynamic) approach for reaction system construction, 
using only the chemical reaction network theory or kinetic logic, 
provides only conditions for stability of critical points, 
but no information about the phase plane structures~\cite{Feinberg}, 
and is, thus, insufficient for construction of the systems 
presented in this paper. 

\subsection{System 1: homoclinic bifurcation and mixed bistability}
\begin{figure}
\centerline{
\hskip -2mm
\includegraphics[width=0.46\columnwidth]{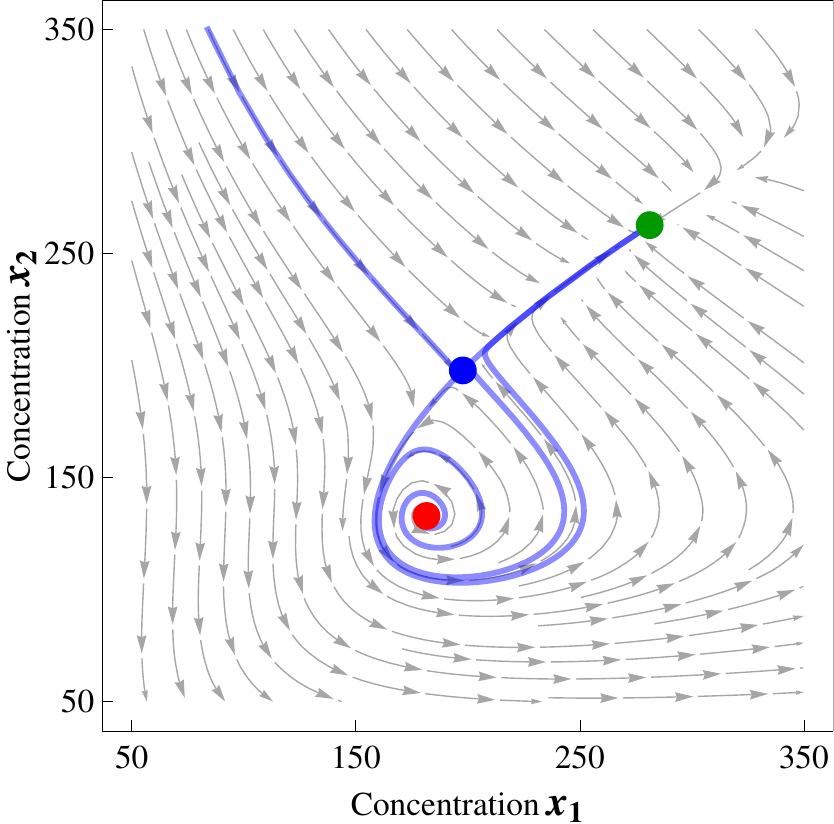}
\hskip 5mm
\includegraphics[width=0.46\columnwidth]{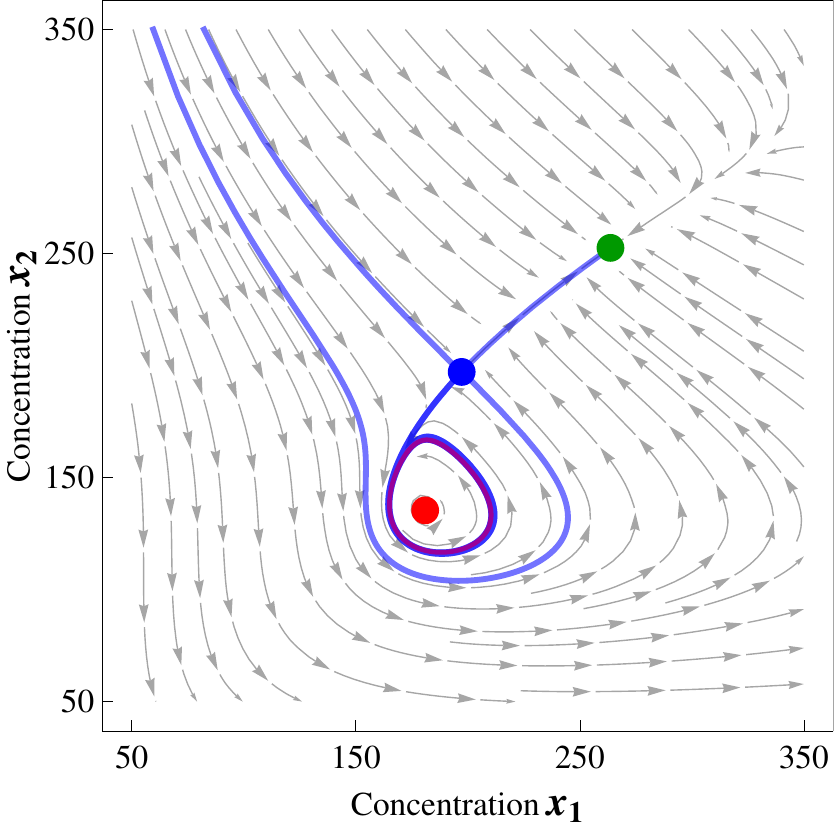}
}
\vskip -6.4cm
\leftline{\hskip 0.2cm (a) {\small$\alpha = 0.05$} \hskip 4.3cm (b) {\small $\alpha = - 0.05$}}
\vskip 6.5cm
\centerline{
\hskip -2mm
\includegraphics[width=0.46\columnwidth]{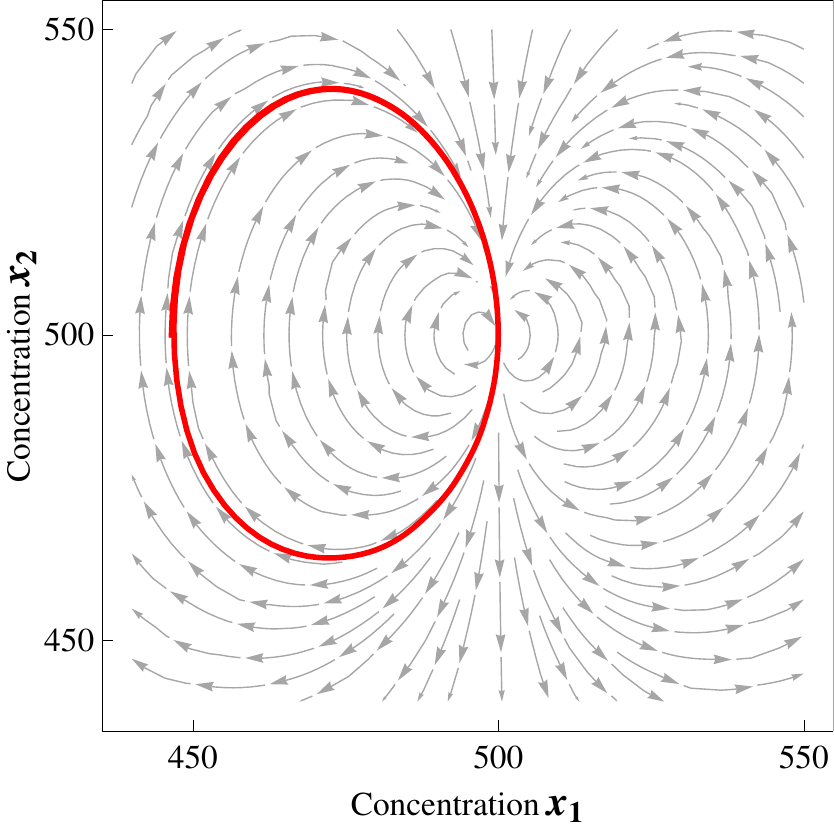}
\hskip 5mm
\includegraphics[width=0.46\columnwidth]{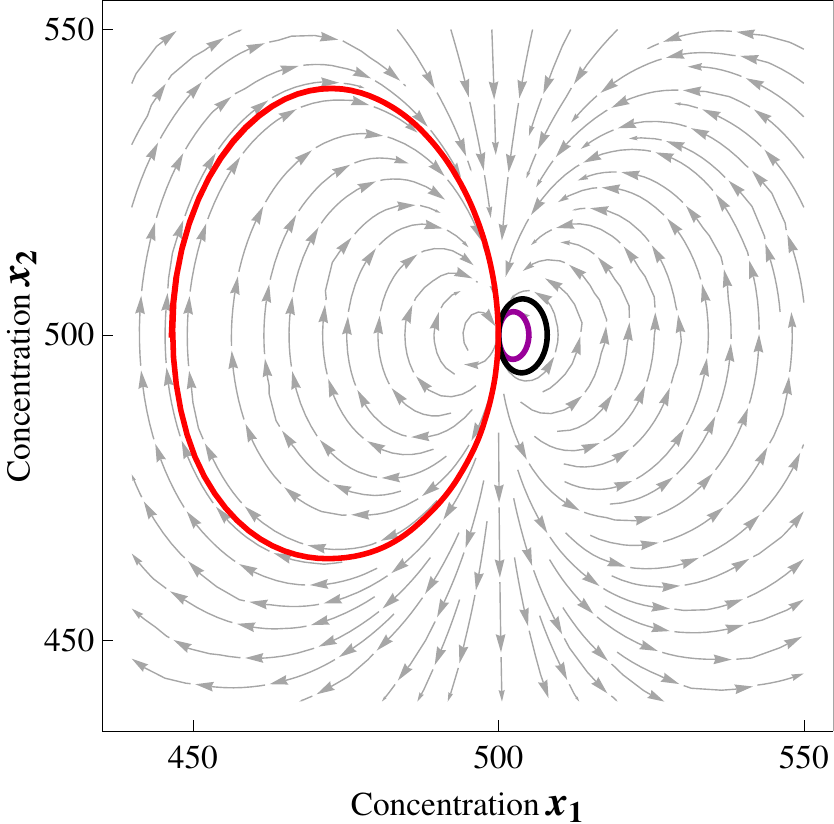}
}
\vskip -6.3cm
\leftline{\hskip 0.2cm (c) {\small $\theta = -0.00147$} \hskip 3.6cm (d) {\small $\theta = -0.00145$}}
\vskip 5.8cm
\caption{ 
{\rm (a)--(b)}
{\it
Phase plane diagrams of system~$(\ref{eq:homoclinic1})$ before and after 
the homoclinic bifurcation. The stable node, saddle, and unstable focus 
are represented as the green, blue and red dots, respectively, the vector 
field as gray arrows, numerically approximated saddle manifolds as blue 
trajectories, and the purple curve in panel {\rm (b)} is the stable limit cycle. 
The parameters appearing in~$(\ref{eq:homoclinic1coefficients})$, and satisfying~$(\ref{eq:homoclinic1parameters})$, are fixed to $a = -0.8$, 
$\mathcal{T}_1 = \mathcal{T}_2 = 2$, $\varepsilon = 0.01$, the reactor volume 
is set to $V = 100$, and the bifurcation parameter $\alpha$ is as shown in 
the panels. }\hfill\break
{\rm (c)--(d)}
{\it Phase plane diagrams of system~$(\ref{eq:bicyclicXT})$ before and after 
the multiple limit cycle bifurcation. The stable limit cycles $L_1$ and 
$L_3$ are shown in red and purple, respectively, while the unstable 
limit cycle $L_2$ is shown in black. The parameters appearing 
in~$(\ref{eq:bicyclicXT2coefficients})$, and 
satisfying~$(\ref{eq:bicyclicXT2parameters})$, are fixed to $a = 1$, 
$b = -1$, $c = 0.5$, $d = 0.08$, $x_1^{*} = -3$, 
$\mathcal{T}_1 = \mathcal{T}_2 = 1000$, $\varepsilon = 0.01$, the reactor 
volume is set to $V = 0.5$, and the bifurcation parameter $\theta$ is as 
shown in the panels.}}
\label{fig:phaseplanes}
\end{figure}

Consider the following deterministic kinetic equations
\begin{align}
\frac{\mathrm{d} x_1}{\mathrm{d} t}  
& = 
k_1 + x_1(k_2 + k_3 x_1 - k_4 x_2 - k_5 x_1 x_2 + k_6 x_2^2),  
\nonumber \\
\frac{\mathrm{d} x_2}{\mathrm{d} t} 
& = 
k_7  + x_2(- k_8 + k_9 x_1 + k_{10} x_2 - k_{11} x_2^2),
\label{eq:homoclinic1}
\end{align}
with the coefficients 
$\mathbf{k} = \mathbf{k}(a, \boldsymbol{\mathcal{T}},\alpha,\varepsilon)$ 
given by
\begin{align}
k_1 & = \varepsilon, \; \; \; \; \; \; \; \; \; \;\; \; \; \; \; \;\; 
\; \; \; \; \;\; \; \; \; \; \;\; \; \; \; \; \;\; \; \; \; \; 
\;\; \; \; \; \; \;\; \; \; \; \; \;\; \; \; \; \; \; \; \; 
k_7  = \varepsilon, \nonumber \\
k_2 & = 
\frac{1}{2} 
\left|\left(3 
\left(\mathcal{T}_2 - \frac{2}{3}\right)(a \mathcal{T}_1 + \mathcal{T}_2) -2 \alpha \mathcal{T}_1 
\right)\right|, \; \; \; 
k_8  = 
|-\mathcal{T}_1 + a \mathcal{T}_2 (\mathcal{T}_2 - 1)|, \nonumber \\
k_3  & = 
\left|-\frac{3}{2} a \left(\mathcal{T}_2 - \frac{2}{3}\right) + \alpha\right|, \; \; \; \; \; \; \; \; \; \; \; \; \; \; \; \; \; \; \; \; \; \; \; \; \; \; \; \; 
k_9 = 1, \nonumber \\
k_4  & =  
\left|1 - \frac{3}{2} (a \mathcal{T}_1 + 2 \mathcal{T}_2)\right|, 
\; \; \; \; \; \; \; \; \; \; \; \; \; \; \; \; \; \; \; \; \; \; \;\; \; \; \; \; \;\;\; \;   
k_{10} = \left|2 a \left(\mathcal{T}_2 - \frac{1}{2} \right)\right|, 
\nonumber \\
k_5 
& = \left|\frac{3}{2} a\right|, \; \; \; \; \; \;\; \; \; \; 
\; \;\; \; \; \; \; \;\; \; \; \; \; \;\; \; \; \; \; \; \;\; \; 
\; \; \; \;\; \; \; \; \; \;\; \; \; \; \; \;\;\;\; \; \; 
k_{11} = |a|, 
\nonumber \\
k_6  & = \frac{3}{2}, 
\label{eq:homoclinic1coefficients}
\end{align}
where $|\cdot|$ denotes the absolute value, and with parameters 
$a$, $\alpha$, $\varepsilon$, $\mathcal{T}_1$, and $\mathcal{T}_2$ 
satisfying
\begin{align}
a & \in (-1,0), \; \; \; 
|\alpha| \ll 1, \; \; \; 1 \ll \varepsilon \le 0, 
\nonumber \\
\mathcal{T}_1 & > \frac{2 \sqrt{3}}{9}, 
\; \; \; 
\mathcal{T}_2 \in \left(\mathrm{max}(1,-a \mathcal{T}_1), 
\frac{2}{3} + \frac{8}{3} a^{-2} (3-a^2) (a + 4 \mathcal{T}_1) \right).
\label{eq:homoclinic1parameters}
\end{align}
The canonical reaction network~\cite{Me} induced by 
system~(\ref{eq:homoclinic1}) is given by~(\ref{eq:homoclinic1net}).

System~(\ref{eq:homoclinic1}) is obtained from 
system~\cite[eq.~(32)]{Me}, which is known to display 
a mixed bistability and a convex supercritical homoclinic 
bifurcation when $\alpha = 0$, $\varepsilon = 0$. We have 
modified \cite[eq.~(32)]{Me} by adding to its right-hand 
side the $\varepsilon$-term from Definition~\ref{def:xft}
(i.e. coefficients $k_1$ and $k_7$ in~(\ref{eq:homoclinic1})), 
thus preventing the long-term dynamics to be trapped on 
the phase plane axes. It can be shown, using 
Theorem~\ref{theorem:Xfact2D}, that choosing a 
sufficiently small $\varepsilon > 0$ 
in~(\ref{eq:homoclinic1coefficients}) does not 
introduce additional positive critical points
in the phase space of~(\ref{eq:homoclinic1}).

In Figures~\ref{fig:phaseplanes}(a) 
and \ref{fig:phaseplanes}(b), we show 
phase plane diagrams of~(\ref{eq:homoclinic1}) before 
and after the bifurcation, respectively, where the 
critical points of the system are shown as the coloured 
dots (the stable node, saddle, and unstable focus are 
shown as the green, blue and red dots, respectively), 
the blue curves are numerically approximated saddle 
manifolds (which at $\alpha = 0$, $\varepsilon = 0$ 
form a homoclinic loop~\cite{Me}), and the purple 
curve in Figure~\ref{fig:phaseplanes}(b) is the 
stable limit cycle that is 
created from the homoclinic separatrix cycle.  Let us note 
that parameter $\alpha$,  appearing 
in~(\ref{eq:homoclinic1coefficients}), controls 
the bifurcation, while parameter $a$ controls 
the saddle-node separation~\cite{Me}.  

In Figures~\ref{fig:homoclinic}(a)--(b) and (d)--(e), 
we show numerical solutions of the initial value problem 
for~(\ref{eq:homoclinic1}) in red, with one initial 
condition in the region of attraction of the node, 
while the other near the unstable focus. The blue 
sample paths are generated by using the Gillespie 
stochastic simulation algorithm on the induced 
reaction network~(\ref{eq:homoclinic1net}), initiated 
near the unstable focus. More precisely, in 
Figures~\ref{fig:homoclinic}(a) and \ref{fig:homoclinic}(d) 
we show the dynamics before the deterministic 
bifurcation, when the node is the globally stable 
critical point for the deterministic model, while 
in Figures~\ref{fig:homoclinic}(b) and 
\ref{fig:homoclinic}(e) we show the dynamics after 
the bifurcation, when the deterministic model 
displays mixed bistability. On the other hand, 
the stochastic model displays relatively frequent 
stochastic switching in Figures~\ref{fig:homoclinic}(a) 
and \ref{fig:homoclinic}(b), when the saddle-node 
separation is relatively small. Let us emphasize 
that the stochastic switching is observed even 
before the deterministic bifurcation. 
In Figures~\ref{fig:homoclinic}(d) and 
\ref{fig:homoclinic}(e), when the saddle-node separation 
is relatively large, the stochastic switching is 
significantly less common, and the stochastic system in the 
state-space is more likely located near the stable 
node. Thus, in Figures~\ref{fig:homoclinic}(d) 
and \ref{fig:homoclinic}(e), the stochastic system 
is less affected by the bifurcation than the 
deterministic system, and, in fact, behaves more 
like the deterministic system before the bifurcation. 
This is also confirmed in Figures~\ref{fig:homoclinic}(c)
and~(f), where we display the $x_2$-marginal stationary probability
mass functions (PMFs) for the smaller and larger saddle-node separations,
respectively, which were obtained by numerically solving the 
chemical master equation (CME)~\cite{VanKampen,Radek2} corresponding to
network~(\ref{eq:homoclinic1net}). Let us note that, by sufficiently increasing 
the saddle-node separation, the left peak in the PMF from
Figure~\ref{fig:homoclinic}(f), corresponding to the deterministic limit cycle,
becomes nearly zero and difficult to detect.

In~\cite{Me2}, we present an algorithm which 
structurally modifies a given reaction network 
under mass-action kinetics, in such a way that 
the deterministic dynamics is preserved, while the 
stochastic dynamics is modified in a controllable state-dependent 
manner. We apply the algorithm on reaction 
network~(\ref{eq:homoclinic1net}), for parameter values similar as in 
Figures~\ref{fig:homoclinic}(d)--(f),
to make the underlying PMF bimodal, so that the underlying sample paths
display stochastic switching between the two deterministic attractors. Furthermore, we also
make the PMF unimodal, and concentrated around the deterministic limit cycle,
so that the underlying sample paths remain near the deterministic limit cycle. 
Meanwhile, we preserve the deterministic dynamics induced by~(\ref{eq:homoclinic1}).

\begin{figure}[!htb]
\centerline{
\hskip -2mm
\includegraphics[width=0.46\columnwidth]{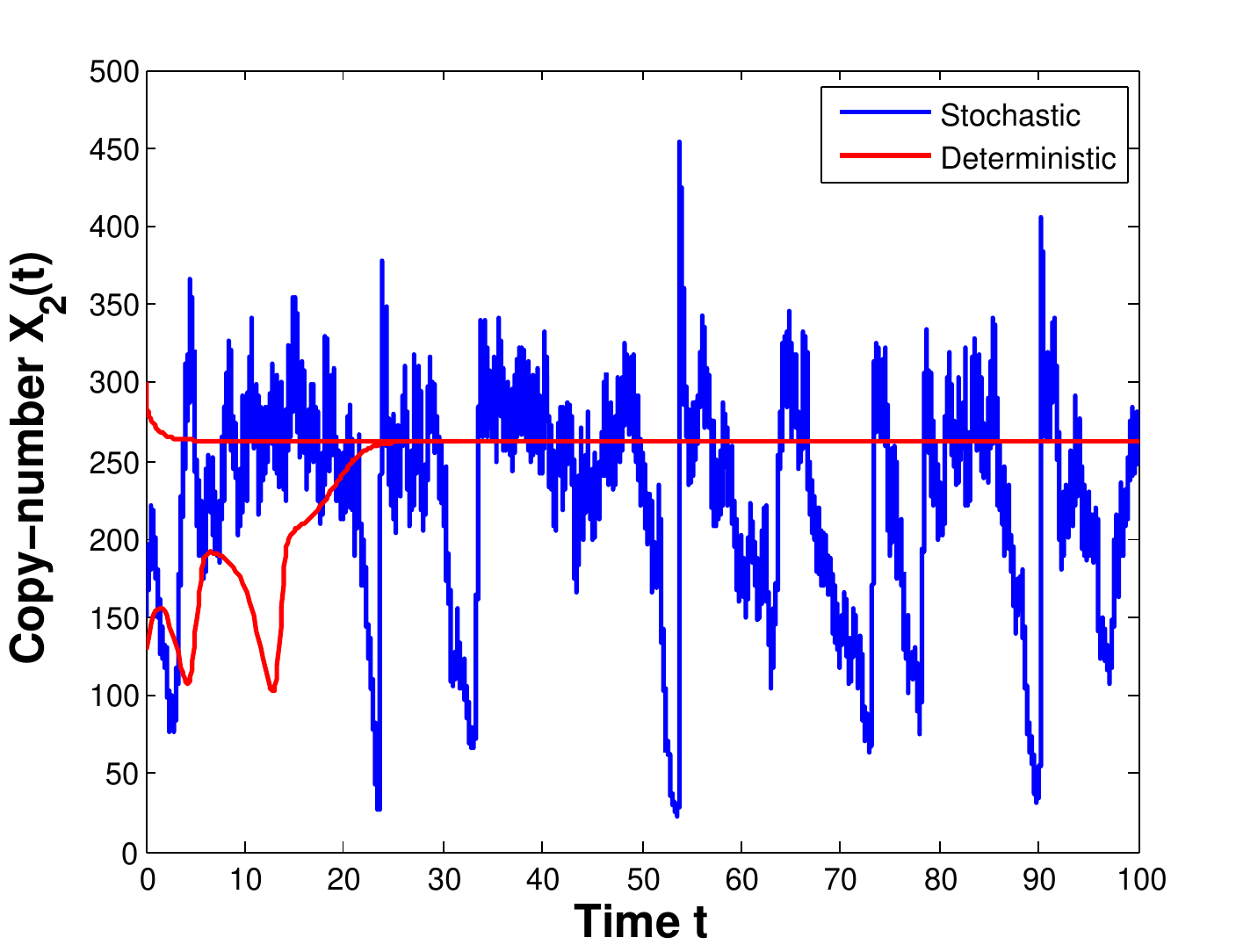}
\hskip 6mm
\includegraphics[width=0.46\columnwidth]{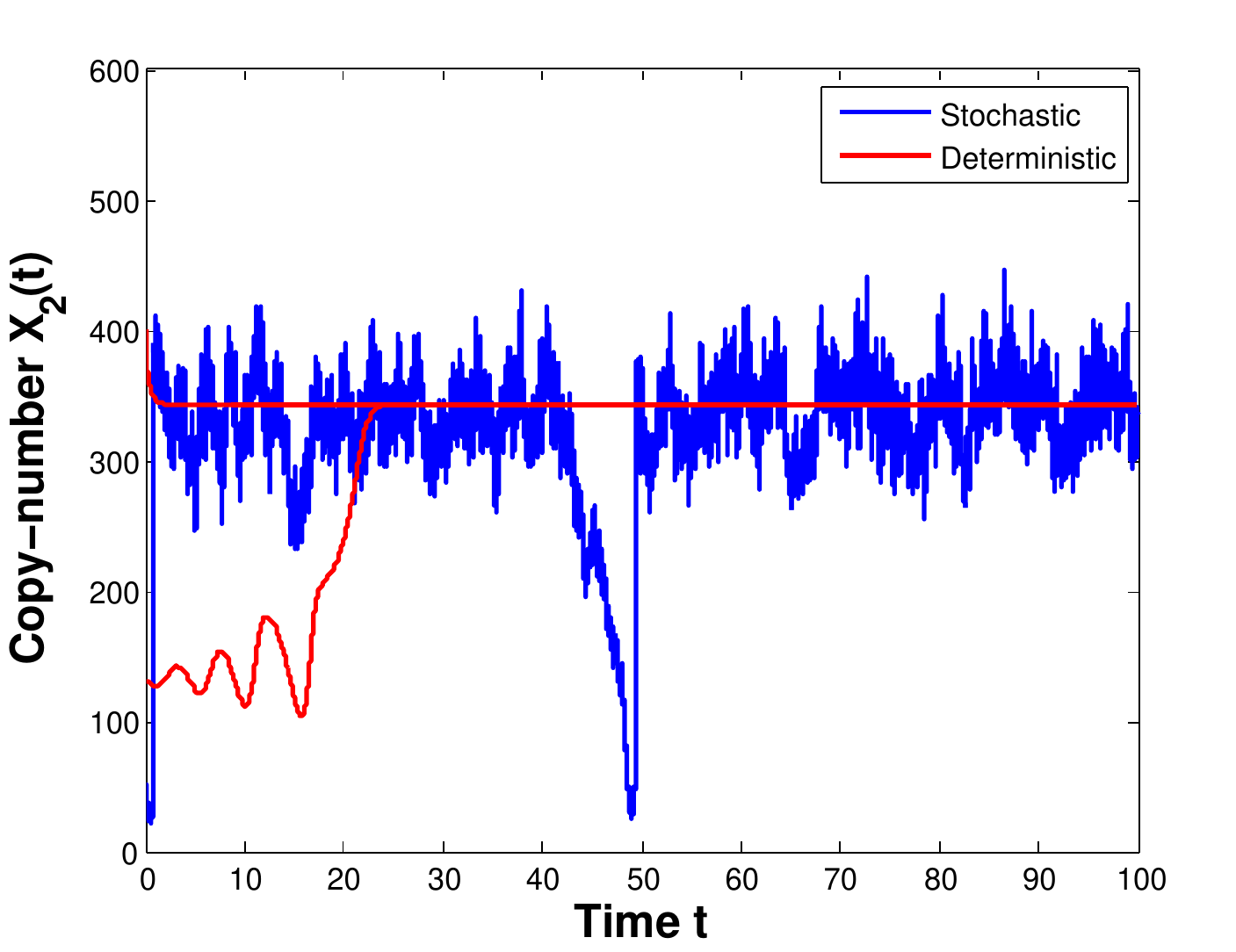}
}
\vskip -5.0cm
\leftline{\hskip 0.4cm (a) {\small $a = -0.8$, $\alpha = 0.05$} 
\hskip 2.9cm (d) {\small $a = -0.65$, $\alpha = 0.05$}}
\vskip 4.8cm
\centerline{
\hskip -2mm
\includegraphics[width=0.46\columnwidth]{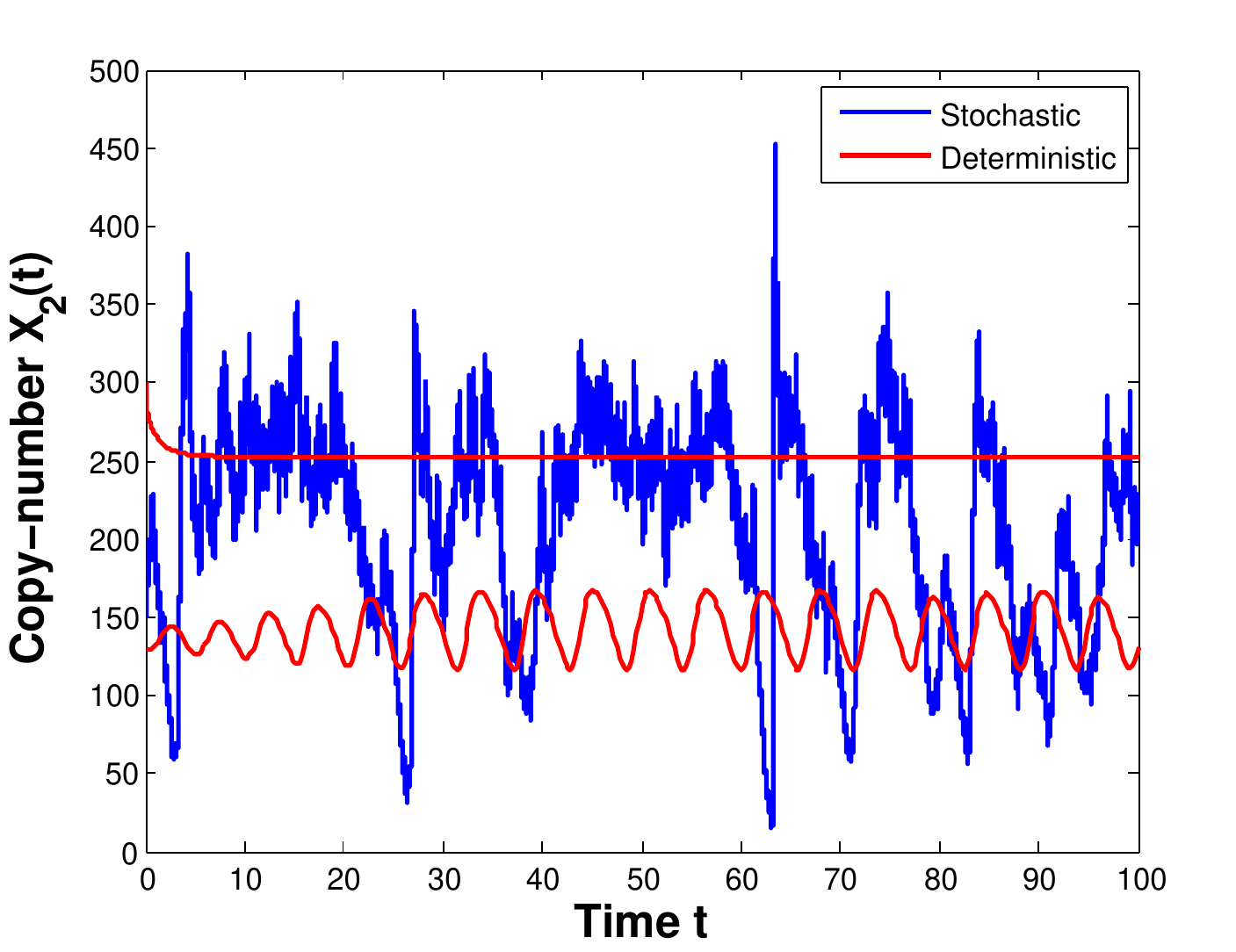}
\hskip 6mm
\includegraphics[width=0.46\columnwidth]{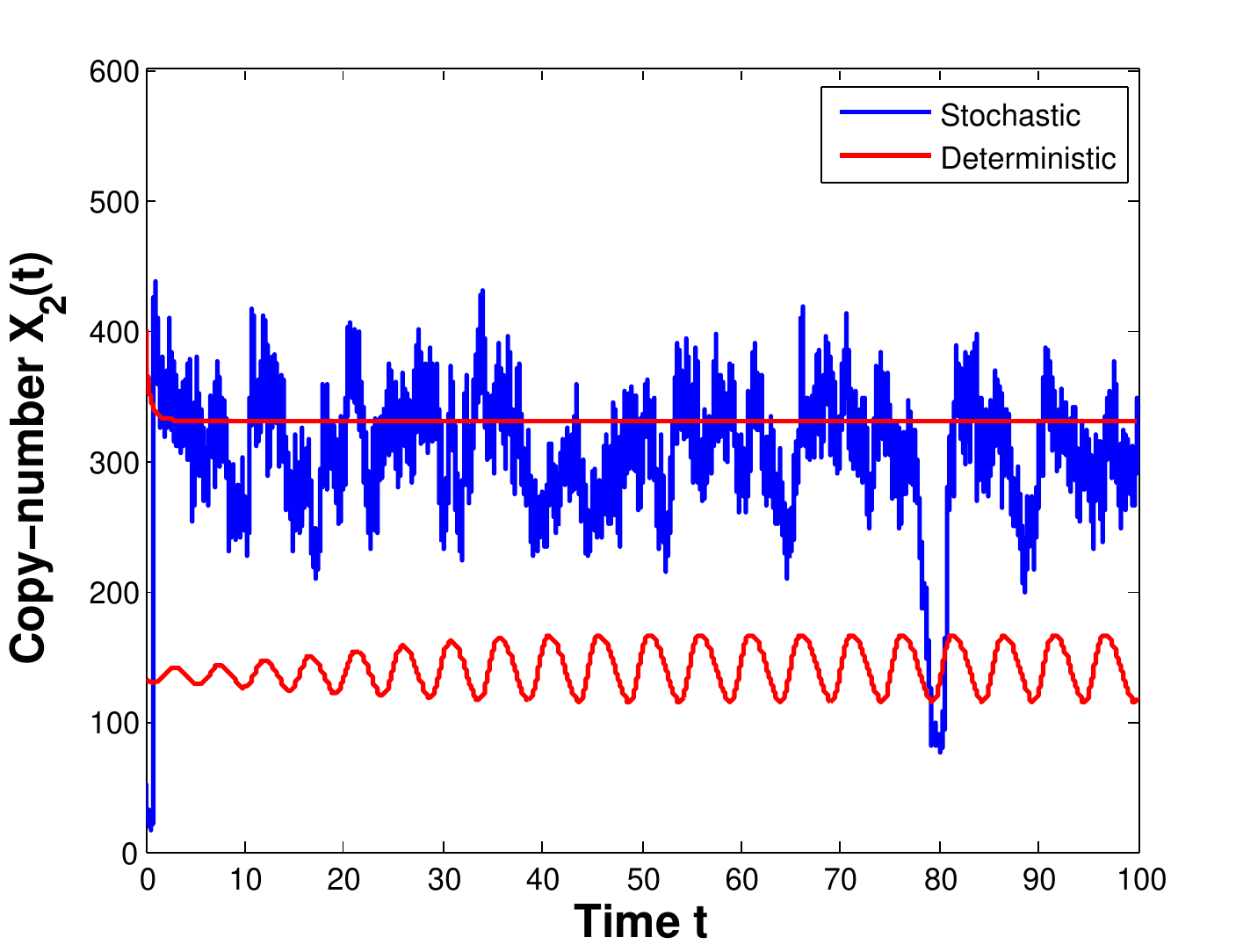}
}
\vskip -5.0cm
\leftline{\hskip 0.4cm (b) {\small $a = -0.8$, $\alpha = -0.05$} 
\hskip 2.6cm (e) {\small $a = -0.65$, $\alpha = -0.05$}}
\vskip 4.8cm
\centerline{
\hskip -2mm
\includegraphics[width=0.46\columnwidth]{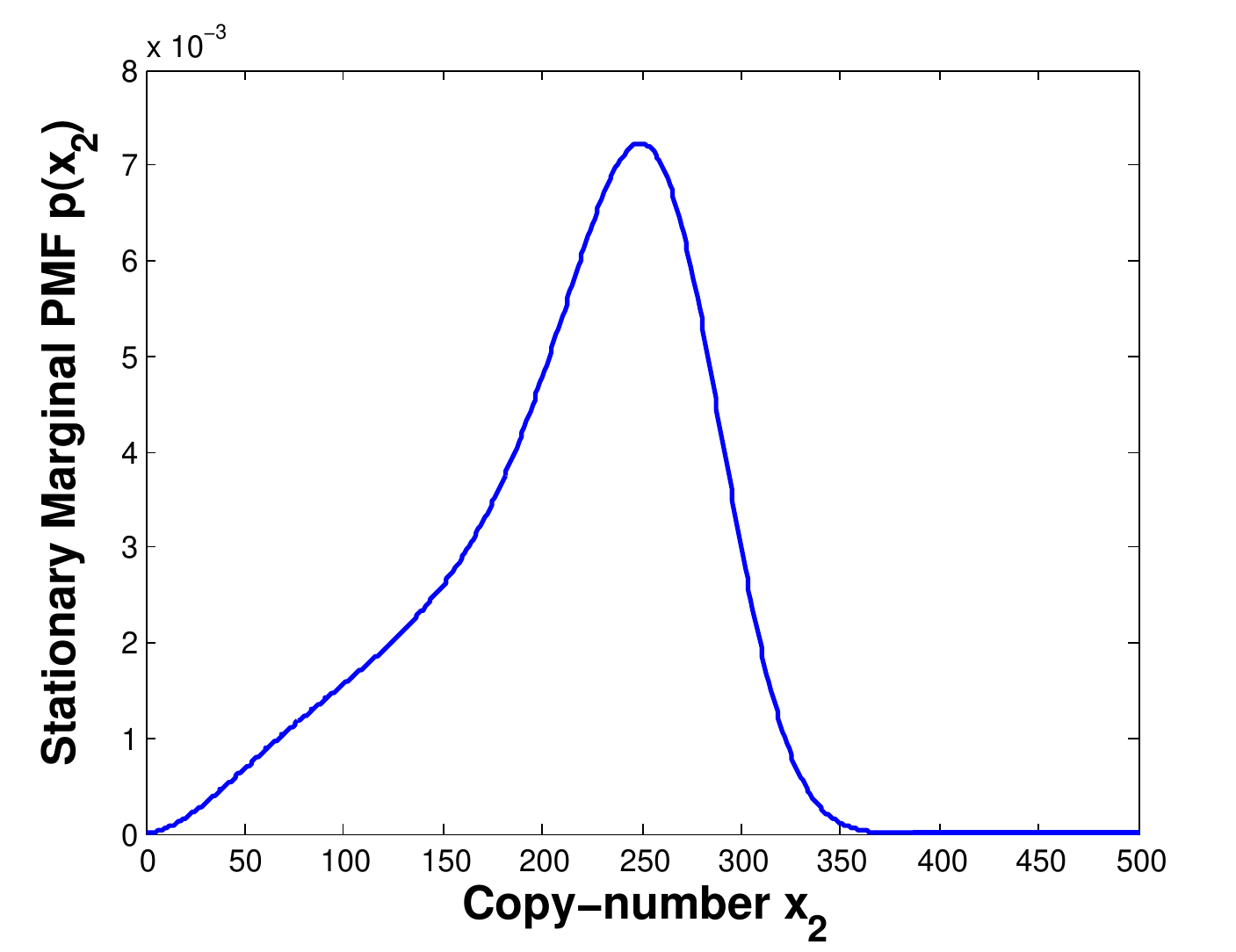}
\hskip 6mm
\includegraphics[width=0.46\columnwidth]{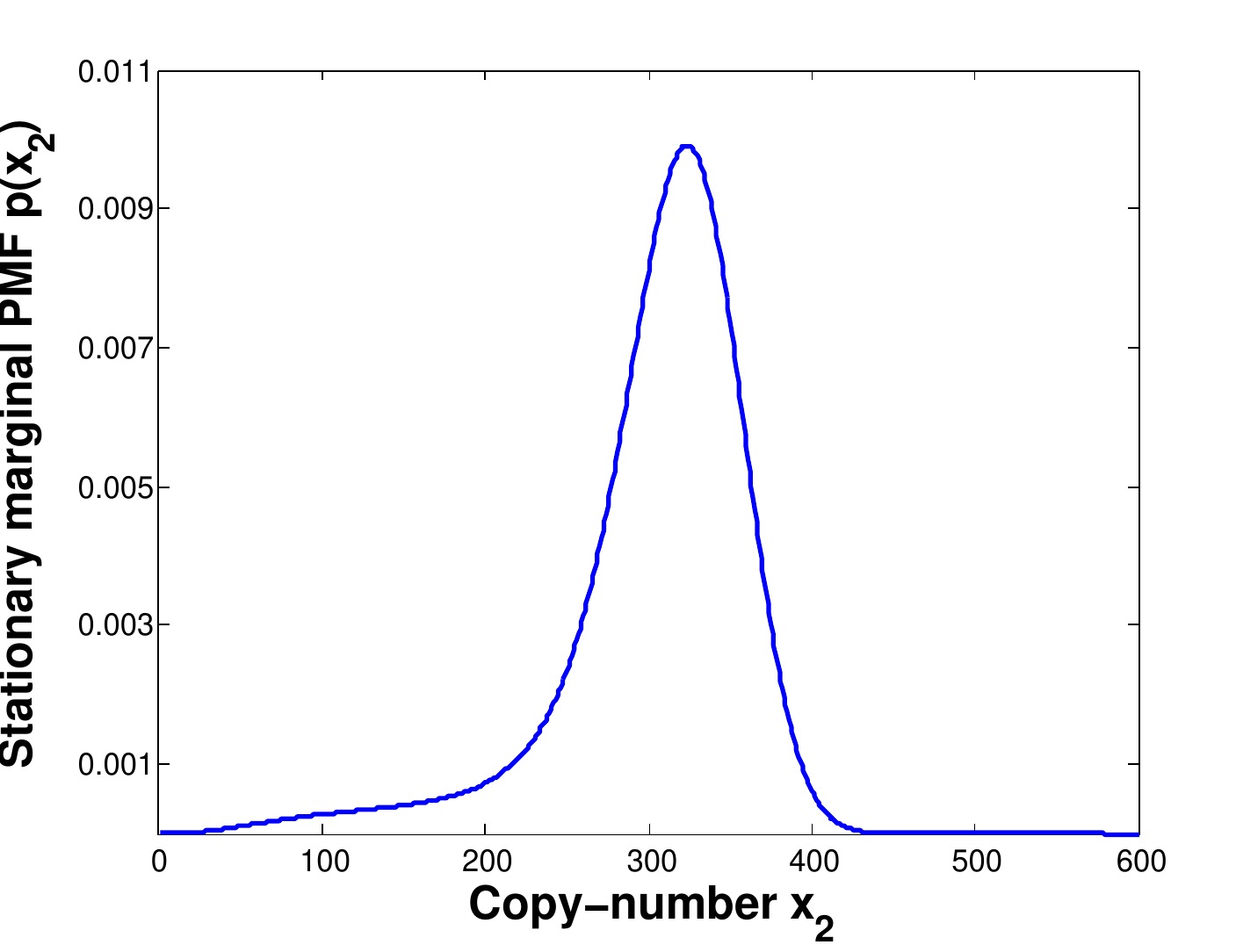}
}
\vskip -5.0cm
\leftline{\hskip 0.4cm (c)
\hskip 6.1cm (f)}
\vskip 4.1cm
\caption{ 
{\it Numerical solutions of system~$(\ref{eq:homoclinic1})$ 
are shown in red. Representative sample paths, 
generated by the Gillespie stochastic simulation algorithm
applied on the corresponding reaction 
network~$(\ref{eq:homoclinic1net})$, are shown in blue.
Probability mass functions (PMFs), obtained
by numerically solving the underlying chemical master equation (CME)
on the bounded domain $(x_1,x_2) \in [0,1000] \times [0,600]$, 
are also shown in blue.\hfill\break}
{\rm (a)--(b)}
{\it The cases before and after the homoclinic bifurcation, 
respectively, for smaller values of $a$, when the limit cycle 
and the stable node are closer together. \hfill\break}
{\rm (d)--(e)}
{\it The cases before and after the homoclinic bifurcation, 
respectively, for larger values of $a$.\hfill\break
{\rm (c) and (f)}
{\it Stationary $x_2$-marginal PMFs.
Parameter values in~(c) and~(f) are
the same as in~(b) and~(e), respectively. \hfill\break}
One of the deterministic solutions is initiated in the 
region of attraction of the node, while the other near 
the focus. The parameters are fixed to 
$\mathcal{T}_1 = \mathcal{T}_2 = 2$, $\varepsilon = 0.01$,  
the reactor volume is set to $V = 100$, with $a$ and $\alpha$ 
as shown in the panels.}}
\label{fig:homoclinic}
\end{figure}

\subsection{System 2: multiple limit cycle bifurcation 
and bicyclicity} \label{sec:constructions2}
Consider the following deterministic kinetic equations
\begin{align}
\frac{\mathrm{d} x_1}{\mathrm{d} t} 
& = k_1 + x_1 (- k_{2} + k_{3} x_1 + k_{4} x_2 + k_{5}x_1^2 - k_{6} x_1 x_2 - k_{7} x_2^2), 
\nonumber \\
\frac{\mathrm{d} x_2}{\mathrm{d} t} 
& = k_8 + x_2 (k_{9} - k_{10} x_1 + k_{11} x_2 + k_{12}x_1^2 - k_{13} x_1 x_2 - k_{14} x_2^2), \label{eq:bicyclicXT}
\end{align}
with coefficients 
$\mathbf{k} = \mathbf{k}(a,b,c,d,x_1^{*},\boldsymbol{\mathcal{T}}, \theta,\varepsilon)$ 
given by
\begin{align}
k_1 & = k_8 = \varepsilon, \nonumber \\
k_{2} & = |- a \mathcal{T}_1 \mathcal{T}_2 \cos(\theta) + [(d (\mathcal{T}_1 + 1) + c \mathcal{T}_2) \mathcal{T}_2  + b (\mathcal{T}_1 + 1) (\mathcal{T}_1 + x_1^{*})] \sin(\theta)|, \nonumber \\
k_{3} & = |a \mathcal{T}_2 \cos(\theta) - [d \mathcal{T}_2 + b (2 \mathcal{T}_1 + x_1^{*} + 1)] \sin(\theta)|, \nonumber \\
k_{4} & = |a \mathcal{T}_1 \cos(\theta) - [d (\mathcal{T}_1 + 1) + 2 c \mathcal{T}_2] \sin(\theta)|, \nonumber \\
k_{5} & = |b \sin(\theta)|, \nonumber \\
k_{6} & = |- a \cos(\theta) + d \sin(\theta)|, \nonumber \\
k_{7} & = |c \sin(\theta)|, \label{eq:bicyclicXT2coefficients}
\end{align}
and if 
$k_{i} = |f(a,b,c,d,x_1^{*},\boldsymbol{\mathcal{T}}) \cos(\theta) 
- g(a,b,c,d,x_1^{*},\boldsymbol{\mathcal{T}}) \sin(\theta)|$, 
then $k_{i + 7} = |f(a,b,c,d,x_1^{*},\boldsymbol{\mathcal{T}}) \sin(\theta) 
+ g(a,b,c,d,x_1^{*},\boldsymbol{\mathcal{T}}) \cos(\theta)|$, 
$i = 2, 3, \ldots, 7$, and with parameters 
$a,$ $b,$ $c,$ $d,$ $x_1^{*},$ $\mathcal{T}_1,$ $\mathcal{T}_2,$ 
$\theta$ and $\varepsilon$ satisfying
\begin{align}
0 \le \varepsilon \ll 1, \, \, \, \, -1 \ll \theta  <  0, \nonumber \\
b < 0, \, \, \, \, d > 0, \, \, \, \, 
a > -\frac{d^2}{4 b}, \, \, \, \, 
0 < c < a + \frac{d^2}{4 b}, \, \, \, \, x_1^{*} < \frac{d^2}{4 b c}, \nonumber \\
a^3 c + b^3 (1 - x_1^{*})^2 \ne 0, \nonumber \\
\mathcal{T}_1 > - x_1^{*},  \, \, \, \, 
 0  < \mathcal{T}_2  <  - \frac{4 a b x_1^{*}}{d^2 (x_1^{*}-1)} (\mathcal{T}_1 + x_1^{*}), 
\nonumber \\
[d (\mathcal{T}_1 + 1) + c \mathcal{T}_2] \mathcal{T}_2 
+ b (\mathcal{T}_1 + 1) (\mathcal{T}_1 + x_1^{*}) < 0. 
\label{eq:bicyclicXT2parameters}
\end{align}
The canonical reaction network induced by 
system~(\ref{eq:bicyclicXT}) is given 
by~(\ref{eq:bicyclicXT2net}). In this section, 
we show that systems~(\ref{eq:bicyclicXT}) 
and~(\ref{eq:bicyclic}) (see below), the latter of 
which is known to display bicyclicity and a multiple 
limit cycle bifurcation, are topologically equivalent 
near the corresponding critical points, provided 
conditions~(\ref{eq:bicyclicXT2parameters}) are satisfied. 

In Figures~\ref{fig:phaseplanes}(c) and \ref{fig:phaseplanes}(d), 
we show the phase plane diagram of~(\ref{eq:bicyclicXT}) 
for a particular choice of the parameters 
satisfying~(\ref{eq:bicyclicXT2parameters}), 
and it can be seen that the system also displays 
bicyclicity and a multiple limit cycle bifurcation, 
with Figures~\ref{fig:phaseplanes}(c) and 
\ref{fig:phaseplanes}(d) showing the cases before 
and after the bifurcation, respectively. In 
Figure~\ref{fig:phaseplanes}(c), the only stable invariant 
set is the limit cycle shown in red, while in 
Figure~\ref{fig:phaseplanes}(d) there are two 
additional limit cycles - a stable one, shown 
in purple, and an unstable one, shown in black. 
The purple, black and red limit cycles are denoted 
in the rest of the paper by $L_1$, $L_2$ and $L_3$,
respectively. At the bifurcation point, $L_1$ and 
$L_2$ intersect.

In order to construct~(\ref{eq:bicyclicXT}), 
let us consider the planar quadratic ODE 
system~\cite{Tung,Escher1} given by
\begin{align}
\frac{\mathrm{d} x_1}{\mathrm{d} t} 
& = \mathcal{Q}_1(x_1,x_2) \cos(\theta)  
- \mathcal{Q}_2(x_1,x_2) \sin(\theta) , \nonumber \\
\frac{\mathrm{d} x_2}{\mathrm{d} t} 
& =  \mathcal{Q}_1(x_1,x_2) \sin(\theta)   
+  \mathcal{Q}_2(x_1,x_2) \cos(\theta), \label{eq:bicyclic}
\end{align}
where
\begin{align}
\mathcal{Q}_1(x_1,x_2) 
& = - a x_1 x_2, \nonumber \\
\mathcal{Q}_2(x_1,x_2) 
& = - b x_1^{*} + b (x_1^{*} + 1) x_1 + d x_2 - b x_1^2 - d x_1 x_2 - c x_2^2, 
\label{eq:bicyclicfunc}
\end{align}
with
\begin{align}
x_1^{*} & < 0, \, \, 
d^2 - 4 b c x_1^{*} < 0, \, \, 
d^2 - 4 b (c - a) < 0, \nonumber \\
\theta d \left(a - b (1 - x_1^{*}) \right) 
& < 0, \, \, \,\theta b d > 0, \, \, \, 
a^3 c + b^3 (1 - x_1^{*})^2 \ne 0. 
\label{eq:bicyclicconditions}
\end{align}

\begin{lemma} \label{lemma:bicyclicity}
\textit{Consider 
system~$(\ref{eq:bicyclic})$--$(\ref{eq:bicyclicconditions})$, 
with the real parameter $\theta \in (-\pi,\pi]$. Function 
$\boldsymbol{\mathcal{P}}(x_1,x_2; \, \theta) 
= ( \mathcal{Q}_1 \cos(\theta)  
- \mathcal{Q}_2 \sin(\theta), \mathcal{Q}_1 \sin(\theta)   
+  \mathcal{Q}_2 \cos(\theta))$ forms a \emph{one-parameter 
family of uniformly rotated vector fields} with the 
rotation parameter $\theta$, and the following results hold:
\begin{enumerate}
\item \emph{Finite critical points.} 
System~$(\ref{eq:bicyclic})$ has two critical 
points in the finite part of the phase plane, 
located at $(1,0)$ and $(x_1^{*},0)$, both of which 
are unstable foci when $|\theta| \ll 1$.
\item \emph{Number and distribution of limit cycles.} 
System~$(\ref{eq:bicyclic})$ has three limit cycles 
in the configuration $(2,1)$ when $|\theta| \ll 1$. The 
focus located at $(1,0)$ is surrounded by two positively 
oriented limit cycles $L_1$ and $L_2$, with the 
unstable limit cycle $L_2$ enclosing the stable limit 
cycle $L_1$, while the focus at $(x_1^{*},0)$ by a single 
negatively oriented stable limit cycle $L_3$.
\item \emph{Dependence of the limit cycles on the rotation 
parameter $\theta$.} There exists a critical value 
$\theta = \theta^{*} < 0$, at which the limit cycles $L_1$ 
and $L_2$ intersect in a semistable, positively oriented 
limit cycle that is stable from the inside, and unstable from 
the outside. As $\theta$ is monotonically increased in 
$(\theta^{*},0)$, the limit cycles $L_2$ and $L_3$ 
monotonically expand, while $L_1$ monotonically contracts.
\end{enumerate}
}
\end{lemma}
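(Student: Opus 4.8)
The plan is to address the three assertions in order, but to begin by establishing the one structural fact that makes everything else tractable, namely that $\boldsymbol{\mathcal{P}}(\,\cdot\,;\theta)$ is a uniformly rotated family. Writing the field as $\boldsymbol{\mathcal{P}} = R(\theta)\mathbf{Q}$ with $\mathbf{Q} = (\mathcal{Q}_1,\mathcal{Q}_2)$ and $R(\theta)$ the standard rotation matrix, I would check Perko's defining inequality $\boldsymbol{\mathcal{P}}\wedge\partial_\theta\boldsymbol{\mathcal{P}}>0$ off the critical set. Since $\partial_\theta R = RJ$ with $J$ the rotation by $\pi/2$, and the planar wedge is invariant under the common factor $R$ (because $\det R = 1$), this collapses to $\mathbf{Q}\wedge(J\mathbf{Q}) = \mathcal{Q}_1^2 + \mathcal{Q}_2^2$, which is strictly positive except at the common zeros of $\mathcal{Q}_1,\mathcal{Q}_2$. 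This two-line computation gives the rotated-field property and, incidentally, shows that the critical set is $\theta$-independent.

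For assertion~1 I would use that $R(\theta)$ is nonsingular, so the finite critical points are exactly the common zeros of $\mathcal{Q}_1$ and $\mathcal{Q}_2$. From $\mathcal{Q}_1 = -a x_1 x_2 = 0$ one has $x_1 = 0$ or $x_2 = 0$: the branch $x_1=0$ forces $c x_2^2 - d x_2 + b x_1^{*} = 0$, whose discriminant $d^2 - 4 b c x_1^{*}$ is negative by hypothesis and so yields no real point, while $x_2 = 0$ gives $-b(x_1-1)(x_1-x_1^{*})=0$ and hence precisely $(1,0)$ and $(x_1^{*},0)$. Their type I would read from the Jacobian $R(\theta)\,D\mathbf{Q}$. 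At $\theta=0$ both instances of $D\mathbf{Q}$ have positive determinant, and the sign hypotheses in~$(\ref{eq:bicyclicconditions})$ (together with $d^2-4b(c-a)<0$) make the characteristic discriminant negative, so both points are foci; the trace of $R(\theta)\,D\mathbf{Q}$ is linear in $\sin\theta$, and the combinations $\theta d(a-b(1-x_1^{*}))<0$ and $\theta b d>0$ are exactly what force this trace to be positive for small $\theta<0$, giving instability at both foci.

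Assertions~2 and~3 carry the real weight, and here I would lean on the quadratic-system analysis of~\cite{Tung} reinforced by the rotated-vector-field machinery of~\cite{LC4,Perko1}. The orientations fall out of the linearization: at $(1,0)$ the off-diagonal entries of $D\mathbf{Q}$ produce counterclockwise (positive) spin, while at $(x_1^{*},0)$ the opposite signs give clockwise (negative) spin, accounting for the $L_1,L_2$ versus $L_3$ orientation split. The cycle $L_3$ is a global stable cycle capturing the flow from the already-unstable focus $(x_1^{*},0)$ and persists for all small $\theta<0$; for the monotone motion in assertion~3 I would invoke Perko's theorem that in a rotated family each cycle expands or contracts strictly monotonically in $\theta$, with direction fixed by its stability and orientation, so that $L_2,L_3$ grow and $L_1$ shrinks as $\theta\uparrow0$, and the stable/unstable pair $L_1,L_2$ must coalesce at a first critical value $\theta^{*}<0$ into a semistable (multiplicity-two) cycle, stable inside and unstable outside, by the standard saddle-node-of-cycles mechanism. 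The main obstacle is certifying the local count of \emph{two} cycles about $(1,0)$: this requires the focal-value (Lyapunov) analysis at that weak focus, transforming to normal form and evaluating the relevant Lyapunov quantity, and it is here that the nondegeneracy condition $a^3 c + b^3(1-x_1^{*})^2\ne0$ must be shown to be precisely the nonvanishing that pins the configuration at $(2,1)$ rather than something lower. Once that multiplicity datum is in hand, the remaining claims follow from the soft rotated-vector-field theory.
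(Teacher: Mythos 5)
Your proposal is correct and follows essentially the same route as the paper, whose entire proof is a citation to Tung, Escher, and the rotated-vector-field theory of Duff and Perko; you simply make explicit the routine parts (the wedge computation $\mathbf{Q}\wedge J\mathbf{Q}=\mathcal{Q}_1^2+\mathcal{Q}_2^2$, the location and type of the critical points, and the sign bookkeeping that makes conditions~(\ref{eq:bicyclicconditions}) force instability of both foci for small $\theta<0$), all of which check out. Like the paper, you import the $(2,1)$ limit cycle configuration from Tung's analysis rather than proving it, so your writeup is, if anything, more detailed than the original while resting on the same external results.
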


\begin{proof}
The statement of the lemma follows from~\cite{Tung,Escher1}, 
and the theory of one-parameter family of uniformly 
rotated vector fields~\cite{LC4,Perko1}.
\end{proof}

In order to map the stable limit cycles of 
system~(\ref{eq:bicyclic}) into the first quadrant, 
and then map the resulting system to a kinetic one, 
having no boundary critical points, let us apply a 
translation transformation $\Psi_{\mathcal{T}}$~\cite{Me}, 
$\boldsymbol{\mathcal{T}} = (\mathcal{T}_1,\mathcal{T}_2) \in \mathbb{R}^2$, 
followed by a perturbed $x$-factorable transformation, 
as defined in Definition~\ref{def:xft}, on 
system~(\ref{eq:bicyclic}), which results in 
system~(\ref{eq:bicyclicXT}) with the 
coefficients~(\ref{eq:bicyclicXT2coefficients}). 

\begin{theorem}\label{lemma:bicyclicityXT}
\textit{Consider the ODE systems~$(\ref{eq:bicyclicXT})$ 
and~$(\ref{eq:bicyclic})$, and assume  
conditions~$(\ref{eq:bicyclicXT2parameters})$ 
are satisfied. Then~$(\ref{eq:bicyclicXT})$ 
and~$(\ref{eq:bicyclic})$ are locally topologically 
equivalent in the neighborhood of the corresponding 
critical points. Furthermore, for sufficiently 
small $\varepsilon > 0$, system~$(\ref{eq:bicyclicXT})$ 
has exactly one additional critical point in 
$\mathbb{R}_{>}^2$, which is a saddle located in 
the neigbhourhood of $(\mathcal{T}_1,0)$.}
\end{theorem}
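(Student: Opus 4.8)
The plan is to exploit the two-step construction that produces~(\ref{eq:bicyclicXT}) from~(\ref{eq:bicyclic}): first the affine translation $\Psi_{\boldsymbol{\mathcal{T}}}$, and then the perturbed $x$-factorable transformation of Definition~\ref{def:xft}. Since $\Psi_{\boldsymbol{\mathcal{T}}}$ is a global diffeomorphism of $\mathbb{R}^2$, it is automatically a topological equivalence, so the only substantive work concerns the $x$-factorable step, whose general properties I would draw from Theorem~\ref{theorem:Xfact2D}. I would first record that, under the conditions $\mathcal{T}_1 > -x_1^{*}$ and $\mathcal{T}_2 > 0$ in~(\ref{eq:bicyclicXT2parameters}), the translation sends the two unstable foci of~(\ref{eq:bicyclic}), located by Lemma~\ref{lemma:bicyclicity} at $(1,0)$ and $(x_1^{*},0)$, into the open quadrant $\mathbb{R}_{>}^2$. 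Writing the translated field as $R(y) = \boldsymbol{\mathcal{P}}(y - \boldsymbol{\mathcal{T}}; \theta)$, the unperturbed $x$-factorable system is $\mathrm{d}y_i/\mathrm{d}t = y_i R_i(y)$, and its interior critical points coincide exactly with these two focus images, since the factor $y_i$ cannot vanish there.

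For the first assertion (local topological equivalence) I would linearise at such an interior point $y^{*}$. Because $R_i(y^{*}) = 0$, the Jacobian of the $x$-factorable field is $\mathrm{diag}(y_1^{*}, y_2^{*})\, DR(y^{*})$, a positive diagonal matrix times the Jacobian $DR(y^{*})$ of the translated quadratic system. Since $\det\big(\mathrm{diag}(y_1^{*},y_2^{*})\, DR\big) = y_1^{*} y_2^{*} \det DR$ retains the sign of $\det DR$, the saddle/non-saddle dichotomy is preserved; the remaining point is to check that each focus stays \emph{unstable}, i.e.\ that the trace stays positive. This is the delicate step, because multiplication by $\mathrm{diag}(y_1^{*}, y_2^{*})$ does not in general preserve the sign of the trace. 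I would settle it by a direct sign computation: using the rotated-field form~(\ref{eq:bicyclicfunc}), the relevant Jacobian entries at the foci are determined, to leading order in $\theta$, by the inequalities $b < 0$, $d > 0$, $x_1^{*} < 0$ (and their consequence $a > 0$) of~(\ref{eq:bicyclicXT2parameters}) together with $-1 \ll \theta < 0$, and combining these yields $y_1^{*}(DR)_{11} + y_2^{*}(DR)_{22} > 0$ for $|\theta|$ small. Hence each focus image remains an unstable node/focus, which is topologically equivalent to the original unstable focus; and for small $\varepsilon > 0$ hyperbolicity and the implicit function theorem guarantee that these critical points and their type persist in a neighbourhood.

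For the second assertion I would analyse how the perturbation acts near the boundary. When $\varepsilon = 0$ the coordinate axes are invariant; on $\{y_2 = 0\}$ the field reduces to $\mathrm{d}y_1/\mathrm{d}t = y_1 R_1(y_1, 0)$, where $R_1(y_1,0)$ is a quadratic in $y_1$. For $|\theta| \ll 1$ its linear coefficient is dominated by the $O(1)$ term $a\mathcal{T}_2$, so one root sits at $y_1 = \mathcal{T}_1 + O(\theta)$, near $(\mathcal{T}_1,0)$, while the other is pushed to $y_1 < 0$ by the signs $b<0$, $\theta<0$ and so lies outside the quadrant. At the near root the Jacobian of the $x$-factorable field is upper triangular with diagonal entries $y_1\, \partial_{y_1} R_1$ and $R_2(y_1,0)$; evaluating signs gives $y_1\, \partial_{y_1}R_1 \approx \mathcal{T}_1\, a\, \mathcal{T}_2 > 0$ and $R_2(\mathcal{T}_1,0) \approx -(b x_1^{*} + d \mathcal{T}_2 + c\mathcal{T}_2^2) < 0$, so this boundary point is a saddle. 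Turning on a small $\varepsilon > 0$ detaches the axes and moves this saddle into $\mathbb{R}_{>}^2$, where by hyperbolicity it persists as the claimed saddle near $(\mathcal{T}_1,0)$.

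The main obstacle is proving that this is the \emph{only} additional positive critical point. I would handle the count by enumerating, for $\varepsilon = 0$, all boundary critical points on $\{y_1 = 0\}$, $\{y_2 = 0\}$ and at the origin, and then applying the criterion of Theorem~\ref{theorem:Xfact2D}, together with the remaining inequalities in~(\ref{eq:bicyclicXT2parameters}) (in particular $0 < \mathcal{T}_2 < -4ab x_1^{*}(\mathcal{T}_1+x_1^{*})/(d^2(x_1^{*}-1))$ and $[d(\mathcal{T}_1+1)+c\mathcal{T}_2]\mathcal{T}_2 + b(\mathcal{T}_1+1)(\mathcal{T}_1+x_1^{*}) < 0$), to show that exactly one boundary point has an attracting transverse direction and therefore spawns an interior critical point under the $\varepsilon$-perturbation, while all the others either remain on the boundary or are repelled out of $\mathbb{R}_{>}^2$. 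Verifying that these algebraic conditions match the hypotheses of Theorem~\ref{theorem:Xfact2D}, and that no spurious interior zero appears far from $(\mathcal{T}_1,0)$, is the part that requires the most care.
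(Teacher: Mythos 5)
Your proposal follows essentially the same route as the paper's proof: track the two foci through the translation and the $x$-factorable step by writing the new Jacobian as a positive diagonal matrix times the old one (checking the trace sign directly for $|\theta|\ll 1$), then enumerate the boundary critical points of the $\varepsilon=0$ system and use the perturbation criterion of Theorem~\ref{theorem:Xfact2D} to decide which are pushed into $\mathbb{R}_{>}^2$ (only the saddle near $(\mathcal{T}_1,0)$) and which are expelled. The sign checks you defer for $(0,0)$ and $(0,x_{2,\pm}^{*})$ are exactly the routine computations the paper carries out (tersely, via ``similar arguments''), so the plan is sound and complete in outline.
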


\begin{proof}
Consider the critical point $(1,0)$ of 
system~(\ref{eq:bicyclic}), which corresponds to 
the critical point $(\mathcal{T}_1+1,\mathcal{T}_2)$ 
of system~(\ref{eq:bicyclicXT}) when $\varepsilon = 0$. 
The Jacobian matrices of~(\ref{eq:bicyclic}),
and~(\ref{eq:bicyclicXT}) with $\varepsilon = 0$, 
evaluated at $(1,0)$, and $(\mathcal{T}_1+1,\mathcal{T}_2)$, 
are respectively given by
\begin{align}
J & = \left(\begin{array}{cc}
-b (x_1^{*} - 1) \sin(\theta) & \; \; -a \cos(\theta)
\\
b (x_1^{*} - 1) \cos(\theta)  & \; \; -a \sin(\theta)
\end{array}
\right), \nonumber \\
J_{\mathcal{X},\mathcal{T}} & = \left(\begin{array}{cc}
-b (x_1^{*} - 1)(\mathcal{T}_1+1) \sin(\theta) 
& \; \; -a (\mathcal{T}_1+1)\cos(\theta)
\\
b  (x_1^{*} - 1) \mathcal{T}_2 \cos(\theta)  
& \; \; -a \mathcal{T}_2 \sin(\theta)
\end{array}
\right). \nonumber
\end{align}
Condition (ii) of \cite[Theorem 3.3]{Me} is satisfied, 
so that the stability of the critical point is preserved 
under the $x$-factorable transformation, but condition 
(iii) is not satisfied. In order for 
$(\mathcal{T}_1+1,\mathcal{T}_2)$ to remain 
focus under the $x$-factorable transformation, 
the discriminant of $J_{\mathcal{X},\mathcal{T}}$ 
must be negative:
\begin{align}
(a \mathcal{T}_2 + b (\mathcal{T}_1 + 1) (x_1^{*} - 1) )^2 (\sin(\theta))^2 - 4 a b (x_1^{*} - 1) (\mathcal{T}_1 + 1) \mathcal{T}_2 & < 0. \label{eq:tempLHS}
\end{align} 
Let us set $\theta = 0$ in~(\ref{eq:tempLHS}), leading to
\begin{align}
- 4 a b (x_1^{*} - 1) (\mathcal{T}_1 + 1) \mathcal{T}_2 & < 0. 
\label{eq:tempLHS2}
\end{align} 
Conditions~(\ref{eq:tempLHS}) and~(\ref{eq:tempLHS2}) are 
equivalent when $|\theta| \ll 1$, since the the sign of 
the function on the LHS of~(\ref{eq:tempLHS}) is 
a continuous function of $\theta$. From 
conditions~(\ref{eq:bicyclicXT2parameters}) it 
follows that $a b < 0$, $x_1^{*} < 0$, and 
$\mathcal{T}_1, \mathcal{T}_2 > 0$, so 
that~(\ref{eq:tempLHS2}) is satisfied. Similar 
arguments show that the second critical point 
of~(\ref{eq:bicyclic}), located at $(x_1^{*},0)$, 
is mapped to an unstable focus of~(\ref{eq:bicyclicXT}), 
if $d > 0$, and if $\mathcal{T}_2$ is bounded as 
given in~(\ref{eq:bicyclicXT2parameters}).

Consider~(\ref{eq:bicyclicXT}) with $\varepsilon = 0$. 
The boundary critical points are located at $(0,0)$, 
$(\mathcal{T}_1,0)$, and $(0,x_{2,\pm}^{*})$, with
\begin{align}
x_{2,\pm}^{*} 
& = \frac{1}{2 c} \left(d (\mathcal{T}_1 + 1) 
+ 2 c \mathcal{T}_2 \pm \sqrt{(\mathcal{T}_1 + 1) 
(d^2 (\mathcal{T}_1 + 1) - 4 b c (\mathcal{T}_1 + x_1^{*}))} \right). \nonumber
\end{align}
Conditions~(\ref{eq:bicyclicXT2parameters}) 
imply that the critical point $(0,0)$ satisfies 
$\mathcal{P}_1(0,0) = - a \mathcal{T}_1 \mathcal{T}_2 < 0$, 
and 
\begin{align}
\mathcal{P}_2(0,0) & = - [d (1 + \mathcal{T}_1) + c \mathcal{T}_2] \mathcal{T}_2 - b (1 + \mathcal{T}_1) (\mathcal{T}_1 + x_1^{*}) >  0, \nonumber
\end{align}
when $\theta = 0$. When $|\theta| \ll 1$, it 
then follows from condition (iv) of \cite[Theorem 3.3]{Me} 
that the critical point is a saddle, and from 
Theorem~\ref{theorem:Xfact2D}, 
condition~(\ref{eq:boundarycondition2}), that it 
is mapped outside of $\mathbb{R}_{\ge}^2$ when 
$\varepsilon \ne 0$. Similar arguments show that,
assuming conditions~(\ref{eq:bicyclicXT2parameters}) 
are true, $(\mathcal{T}_1,0)$ is a saddle that is 
mapped to $\mathbb{R}_{>}^2$ when $\varepsilon \ne 0$, 
and that critical points $(0,x_{2,\pm}^{*})$ are 
real, $x_{2,-}^{*} < 0$, and that $(0,x_{2,+}^{*})$ 
is a saddle that is mapped outside $\mathbb{R}_{\ge}^2$ 
when $\varepsilon \ne 0$. 

Finally, if conditions~(\ref{eq:bicyclicXT2parameters}) 
are satisfied, so are conditions~(\ref{eq:bicyclicconditions}).
\end{proof}

We now consider the kinetic ODEs~(\ref{eq:bicyclicXT}) and the induced reaction network~(\ref{fig:nonconcentricbicyclic}) for a particular set of coefficients~(\ref{eq:bicyclicXT2coefficients}).
We also rescale 
the time according to $t \to 2 \times 10^{-5} \,t$, i.e. we multiply all 
the coefficients $k_1, \ldots k_{14}$ appearing in~(\ref{eq:bicyclicXT}) 
by $ 2 \times 10^{-5}$. On this time-scale, we capture 
dynamical effects relevant for this paper.
In Figures~\ref{fig:nonconcentricbicyclic}(a) 
and \ref{fig:nonconcentricbicyclic}(b) we show numerically 
approximated solutions of the initial value problem 
for~(\ref{eq:bicyclicXT}) before and after the bifurcation, 
respectively. In Figure~\ref{fig:nonconcentricbicyclic}(a), 
the solution is initiated near the unstable focus outside the 
limit cycle $L_3$, and it can be seen that the solution spends 
some time near the unstable focus, followed by an excursion 
that leads it to the stable limit cycle $L_3$, where is then 
stays forever. In Figure~\ref{fig:nonconcentricbicyclic}(b), 
the solutions tend to the limit cycle $L_1$ or $L_3$, depending 
on the initial condition. Let us note that the critical value 
at which the limit cycles $L_1$ and $L_2$ intersect, at the 
deterministic level, is numerically found to be 
$\theta^{*} \approx -0.00146$. 

In Figures~\ref{fig:nonconcentricbicyclic}(c) and 
\ref{fig:nonconcentricbicyclic}(d) we show representative 
sample paths generated by applying the Gillespie stochastic 
simulation algorithm on the reaction 
network~(\ref{eq:bicyclicXT2net}), before and after the 
bifurcation, respectively. One can notice that the stochastic 
dynamics does not appear to be significantly influenced 
by the bifurcation, as opposed to the deterministic 
dynamics. In Figures~\ref{fig:nonconcentricbicyclic}(c) and 
\ref{fig:nonconcentricbicyclic}(d),   
one can notice pulses similar as in
Figure~\ref{fig:nonconcentricbicyclic}(a), that are 
now induced by the intrinsic noise present in the system.

The stationary PMF corresponding to 
network~(\ref{eq:bicyclicXT2net}), for parameter
values as in Figures~\ref{fig:nonconcentricbicyclic}(c) and 
\ref{fig:nonconcentricbicyclic}(d),
accumulates at the boundary of the state-space 
(see also the Keizer paradox~\cite{Keizer}).
While the results from Appendix~\ref{app:xfactorable}
may be used to prevent a PMF from accumulating 
at the boundary, one may need a sufficiently large
reactor volume. For example, for network~(\ref{eq:homoclinic1net}), 
the propensity function~\cite{Radek2} of reactions $r_1$ and $r_7$, for parameter values
taken in this paper (i.e. $\varepsilon = 0.01$ in~(\ref{eq:homoclinic1coefficients}), 
and $V = 100$), takes the value $\varepsilon V = 1$. This is sufficient for 
the underlying PMF to approximately vanish at the boundary of the state-space, 
as demonstrated in Figures~\ref{fig:homoclinic}(c) and~(f). 
On the other hand, for network~(\ref{eq:bicyclicXT2net}), we take $\varepsilon = 0.01$
in~(\ref{eq:bicyclicXT2coefficients}), and $V = 0.5$, so that the propensity function of
$r_1$ and $r_8$ takes the value of only $0.005$. As a consequence, the underlying PMF
accumulates at the boundary of the state-space. Instead of increasing the reactor volume
to prevent this, we instead 
focus on the so-called quasi-stationary PMF under the condition that the species copy-numbers are positive,
$p_{>}(x,y) \equiv p(x,y|x>0,y>0)$. The quasi-stationary PMF
describes well the stochastic dynamics of network~(\ref{eq:bicyclicXT2net}) on the time-scale
of interest, presented in Figures~\ref{fig:nonconcentricbicyclic}(c) and 
\ref{fig:nonconcentricbicyclic}(d). In Figure~\ref{fig:nonconcentricbicyclic}(e), we display
an approximate $x_1$-marginal quasi-stationary PMF $p_{>}(x_1)$, for the same parameter values as in 
Figure~\ref{fig:nonconcentricbicyclic}(d). The quasi-stationary PMF $p_{>}(x_1)$
was obtained by numerically solving the stationary CME corresponding to network~(\ref{eq:bicyclicXT2net}),
on a truncated domain which excludes the boundary of the state-space.

\begin{figure}[t]
\centerline{
\hskip -2mm
\includegraphics[width=0.46\columnwidth]{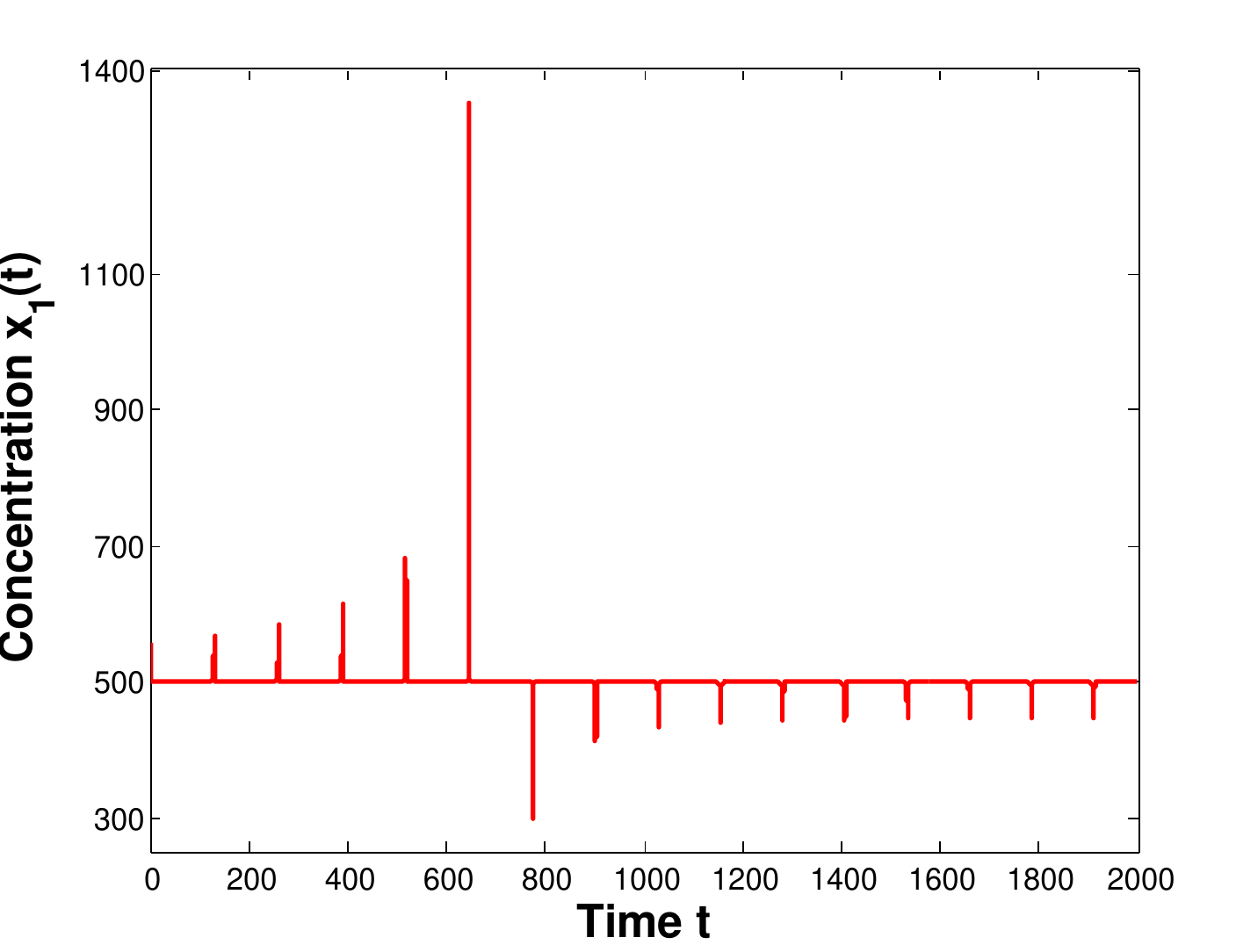}
\hskip 6mm
\includegraphics[width=0.46\columnwidth]{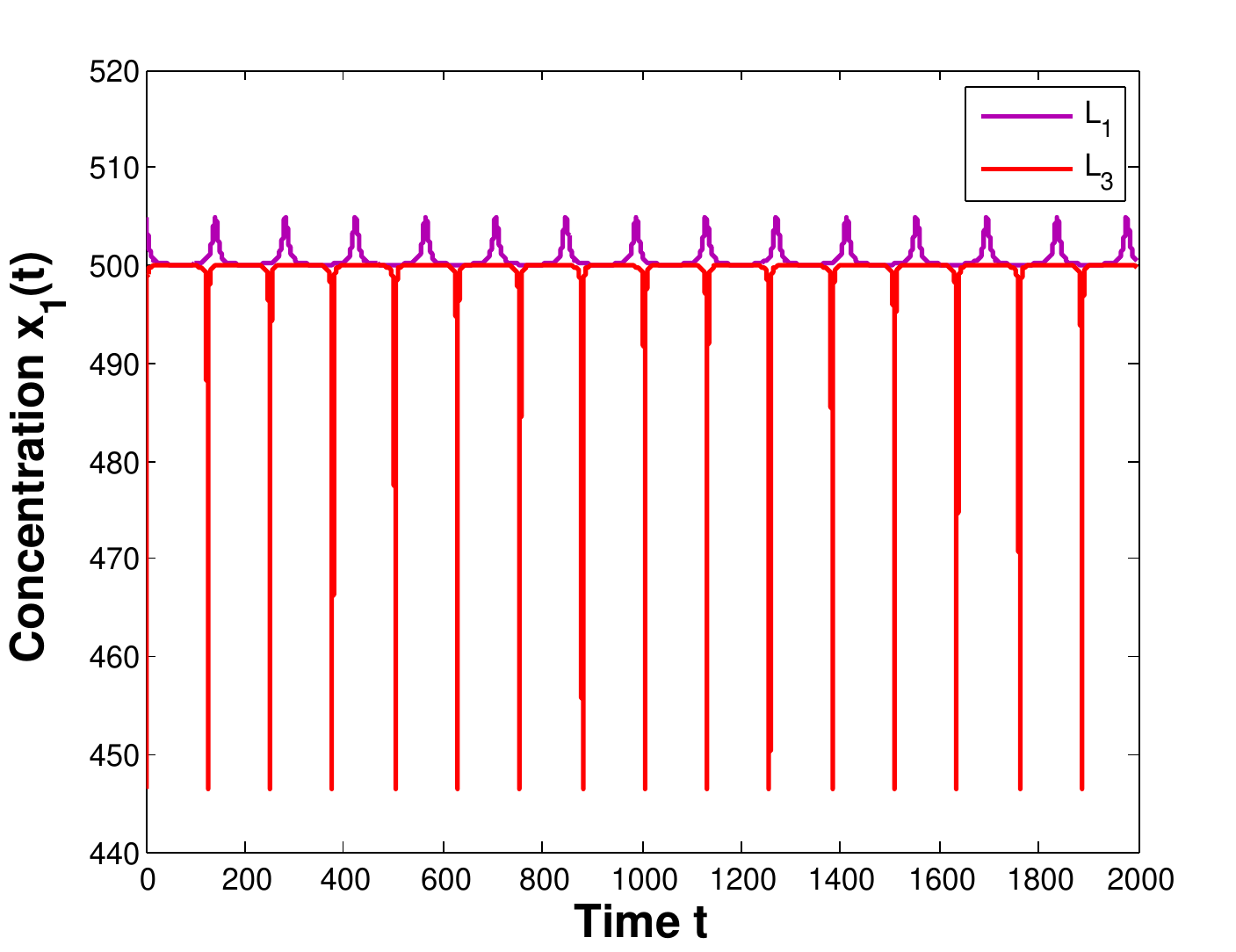}
}
\vskip -5.0cm
\leftline{\hskip 0.4cm (a) {\small $\theta = -0.00147$} 
\hskip 3.8cm (b) {\small $\theta = -0.00145$}}
\vskip 4.8cm
\centerline{
\hskip -2mm
\includegraphics[width=0.46\columnwidth]{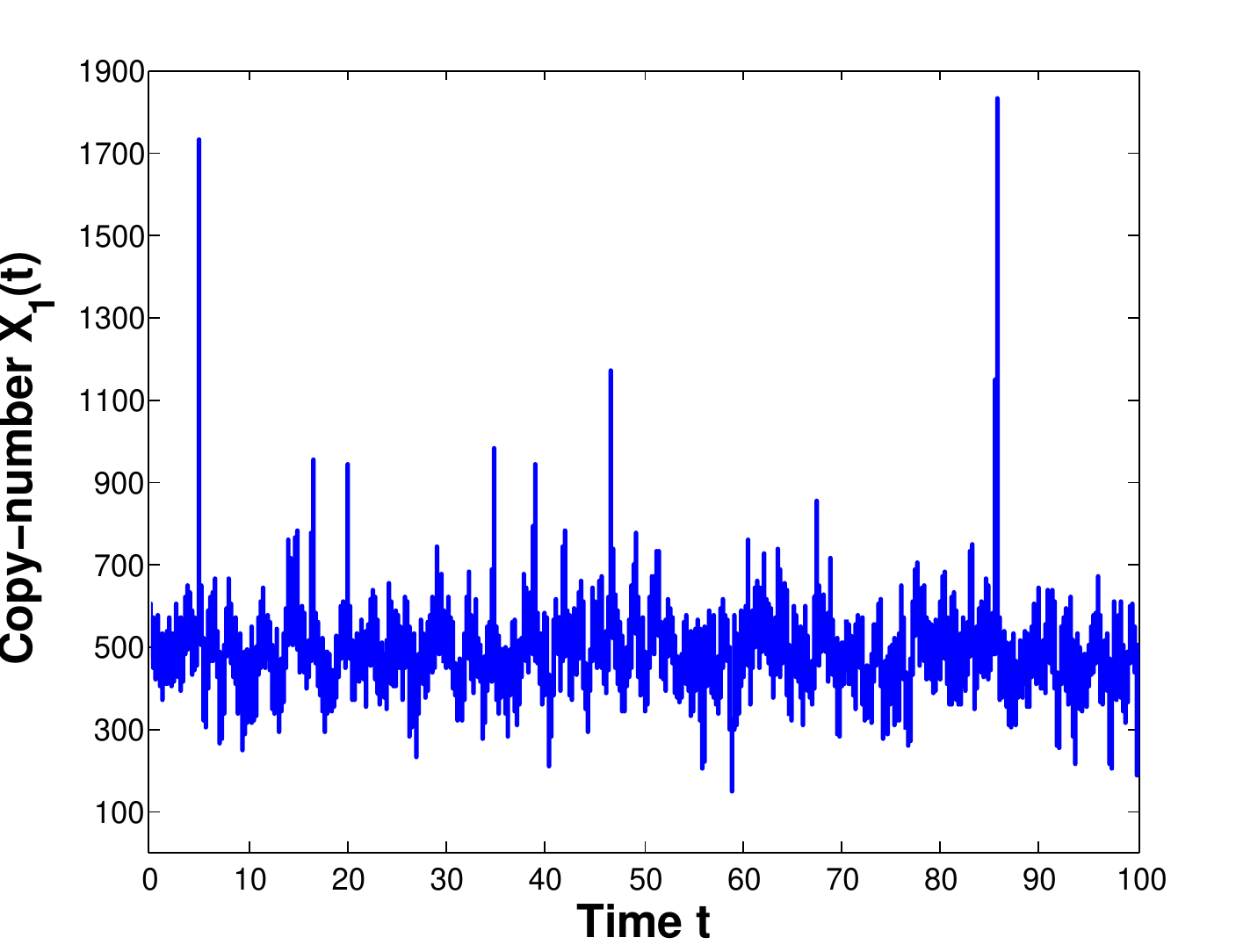}
\hskip 6mm
\includegraphics[width=0.46\columnwidth]{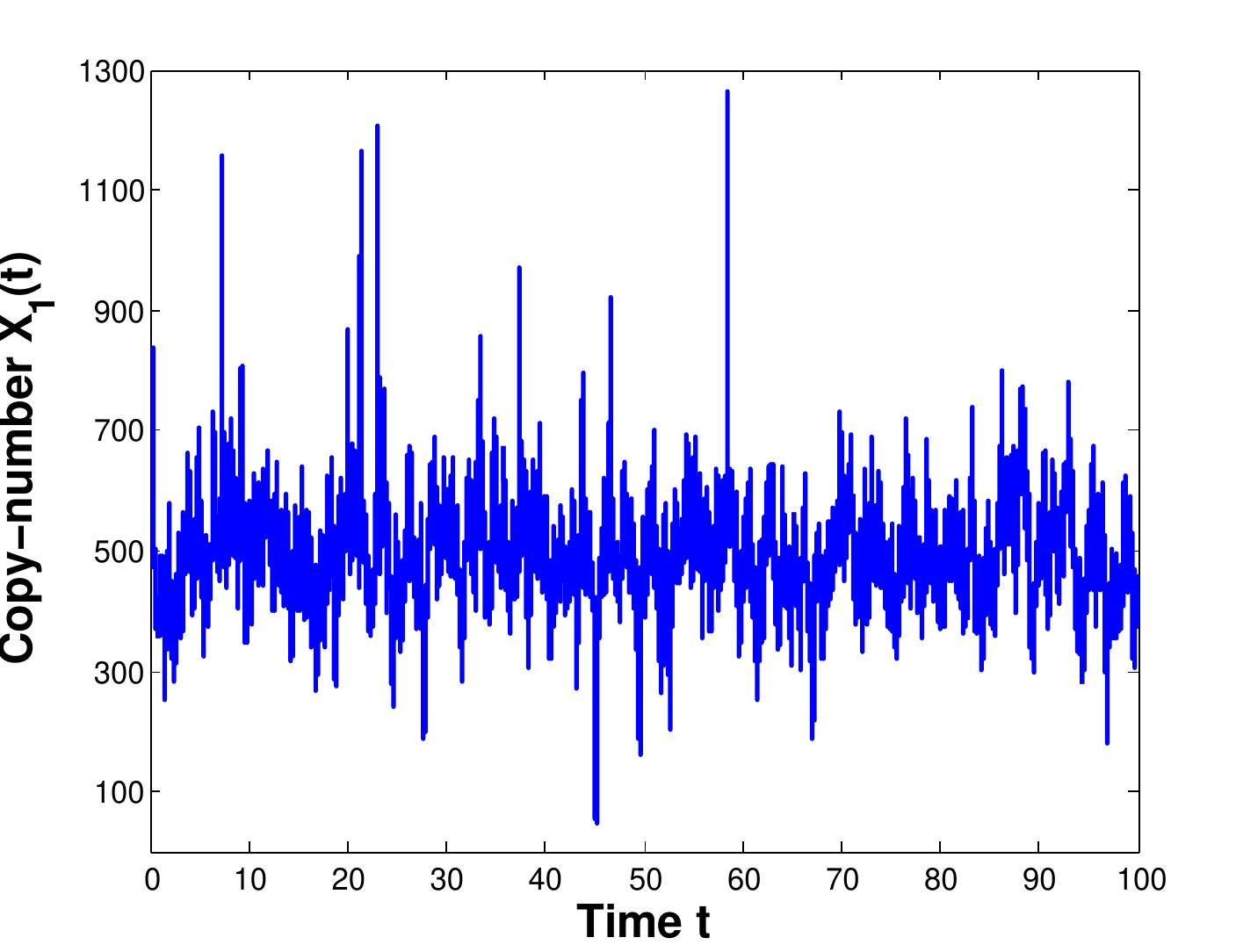}
}
\vskip -5.0cm
\leftline{\hskip 0.4cm (c) {\small $\theta = -0.00147$} 
\hskip 3.8cm (d) {\small $\theta = -0.00145$}}
\vskip 4.8cm
\centerline{
\hskip -2mm
\includegraphics[width=0.46\columnwidth]{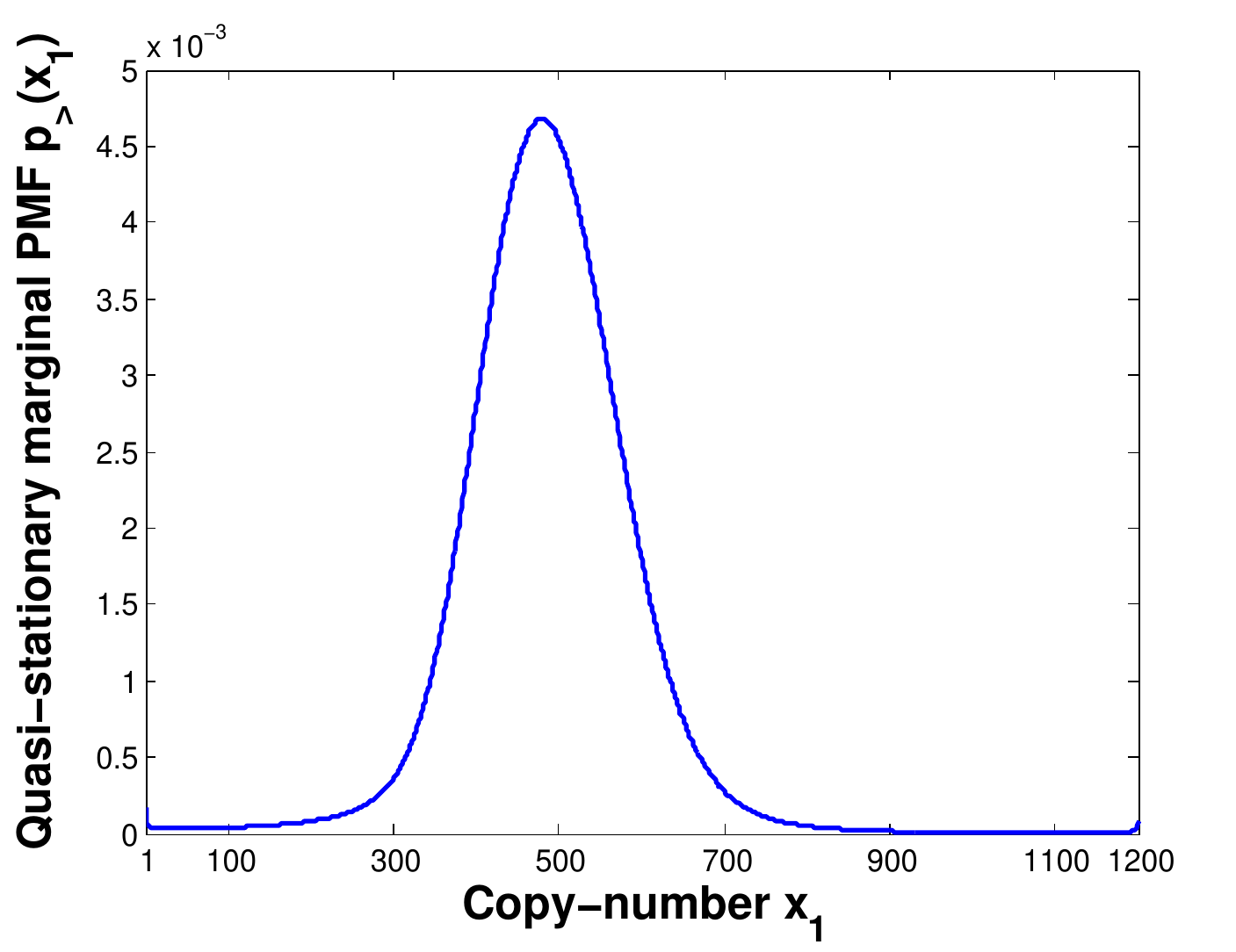}
}
\vskip -4.5cm
\leftline{\hskip 2.7cm (e)}
\vskip 4.1cm
\caption{ 
{\rm (a)--(b)} 
{\it Numerical solutions of the kinetic ODE system 
given by~$(\ref{eq:bicyclicXT})$ before and after the 
bifurcation, where in {\rm (b)} the trajectory initiated 
near the stable limit cycle $L_1$ is shown in purple, 
while the one initiated near $L_3$ in red. \hfill\break
{\rm (c)--(d)} 
Sample paths generated by the Gillespie stochastic
simulation algorithm applied to the induced reaction 
network~$(\ref{eq:bicyclicXT2net})$ before and after the 
bifuration. \hfill\break
{\rm (e)} 
Approximate quasi-stationary $x_1$-marginal PMF,
obtained by numerically solving the stationary CME,
corresponding to network~(\ref{eq:bicyclicXT2net}),  
on the bounded domain $(x_1,x_2) \in [1,1200] \times [1,1200]$,
for the same parameters values as in (d).\hfill\break
The parameters appearing in~$(\ref{eq:bicyclicXT2coefficients})$ 
are fixed to $a = 1$, $b = -1$, $c = 0.5$, $d = 0.08$, 
$x_1^{*} = -3$, $\mathcal{T}_1 = \mathcal{T}_2 = 1000$, 
$\varepsilon = 0.01$, with the reactor volume $V = 0.5$, and 
$\theta$ as indicated in the plots. 
Coefficients~$(\ref{eq:bicyclicXT2coefficients})$ are multiplied 
by a constant factor of $2 \times 10^{-5}$ (time-rescaling).}}
\label{fig:nonconcentricbicyclic}
\end{figure}

\section{Summary} \label{sec:summary}
In the first part of the paper, in Section~\ref{sec:properties}, we have 
presented theoretical results regarding oscillations, 
oscillation-related bifurcations and multistability 
in the planar quadratic kinetic ODEs~(\ref{eq:kinetic}),
which are (appropriately) bounded in the nonnegative quadrant. 
Such ODEs are used in applications to
describe the deterministic dynamics of concentrations of two 
biological/chemical species, with
at most quadratic interactions. While the kinetic ODEs~(\ref{eq:kinetic})
inherit many properties from the more general 
planar quadratic ODEs~(\ref{eq:polynomial}), some properties, which
are of biological/chemical relevance, are not necessarily inherited. 
For example, we have formulated the following open problem:
while general planar quadratic ODEs~(\ref{eq:polynomial}) may display bicyclicity 
(a coexistence of two stable oscillatory attractors), is the same true
for the kinetic planar quadratic ODEs~(\ref{eq:kinetic})?

In Section~\ref{sec:constructions}, building upon the results 
from Section~\ref{sec:properties}, and using the results from~\cite{Me} 
and Appendix~\ref{app:xfactorable},
we have constructed two reaction networks, with the deterministic
dynamics described by planar cubic kinetic ODEs. 
The first network is given by~(\ref{eq:homoclinic1net}),
and, at the deterministic level, displays a homoclinic bifurcation, and a coexistence of 
a stable critical point and a stable limit cycle (mixed bistability). 
The second network is given by~(\ref{eq:bicyclicXT2net}), and, at the deterministic level,
displays a multiple limit cycle 
bifurcation, and a coexistence of two stable limit cycles (bicyclicity).
The phase planes of the kinetic ODEs induced by the first network before and 
after the bifurcation are shown in Figures~\ref{fig:phaseplanes}(a) 
and~\ref{fig:phaseplanes}(b), respectively, while for the second network in 
Figures~\ref{fig:phaseplanes}(c) and \ref{fig:phaseplanes}(d).

In Figure~\ref{fig:homoclinic}, we have compared the deterministic and 
stochastic solutions corresponding to the first reaction network~(\ref{eq:homoclinic1net}),
with the rate coefficients such that the deterministic solutions are close to the
homoclinic bifurcation. Analogously, in Figure~\ref{fig:nonconcentricbicyclic},
we have done the same for reaction network~(\ref{eq:bicyclicXT2net}),
when the deterministic solutions are close to the multiple limit cycle bifurcation.
In both Figures~\ref{fig:homoclinic} and~\ref{fig:nonconcentricbicyclic}, we observe
qualitative differences between the deterministic and stochastic dynamics. In particular,
the stochastic dynamics in Figure~\ref{fig:homoclinic} may display stochastic switching
near the deterministic bifurcation. Furthermore, the dynamics of both networks 
are not affected qualitatively by the 
deterministic bifurcation sharply at the bifurcation point.

In Section~\ref{sec:intro}, we have outlined the statistical inference problem,
consisting of detecting and classifying cycles (oscillations) in noisy
time-series, and we have put forward networks~(\ref{eq:homoclinic1net})
and~(\ref{eq:bicyclicXT2net}) as suitable test problems. Network~(\ref{eq:homoclinic1net})
poses two inference challenges: firstly, let us consider the scenario shown in
Figures~\ref{fig:homoclinic}(d)--(f). In this case, the relative separation
between the two deterministic attractors is larger. 
Consequently, at the stochastic level, the corresponding marginal 
probability mass function (PMF), shown in Figure~\ref{fig:homoclinic} (f), is bimodal.
However, the left peak, corresponding to the deterministic limit cycle, is much smaller
than the right peak, corresponding to the deterministic critical point (a node).
Using the shape of the marginal PMF, as put forward in~\cite{NOI1},
one cannot conclude the presence of a noisy limit cycle. Let us note that, by
sufficiently increasing the distance between the two attractors, the left PMF peak from 
Figure~\ref{fig:homoclinic}(f) approximately vanishes, making the inference
problem even harder. On the other hand, using the covariance function 
(and spectral analysis), as put forward in~\cite{NOI1}, may also be limited, as the noisy
time-series spends a smaller amount of time near the deterministic limit cycle, as demonstrated in 
Figure~\ref{fig:homoclinic}(e). Secondly, let us consider the scenario shown in
Figures~\ref{fig:homoclinic}(a)--(c), when the relative separation
between the two deterministic attractors is smaller. In this case, it may be a challenge to 
infer that there are two distinct attractors `hidden' in the time-series
shown in Figure~\ref{fig:homoclinic}(b), and the PMF shown in 
Figure~\ref{fig:homoclinic}(c). The fact that the PMF in Figure~\ref{fig:homoclinic}(c)
is a non-Gaussian may be used as an indication of a certain dynamical complexity. 
The problem becomes more difficult for network~(\ref{eq:bicyclicXT2net}), 
with two stable deterministic limit cycles `hidden' in the noisy time-series shown in
Figure~\ref{fig:nonconcentricbicyclic}(d), and in the PMF shown in
Figure~\ref{fig:nonconcentricbicyclic}(e). Let us note that the PMF is
approximately Gaussian, and this persists for a wide range of larger 
reactor volumes.

\vskip 4.9mm
\noindent
\textbf{Acknowledgments:} The authors would like to thank the Isaac 
Newton Institute for Mathematical Sciences, Cambridge, for support 
and hospitality during the programme ``Stochastic Dynamical Systems 
in Biology: Numerical Methods and Applications'' where work on this 
paper was undertaken. This work was supported by EPSRC grant no 
EP/K032208/1. This work was partially supported by a grant from 
the Simons Foundation. Tom\'a\v{s} Vejchodsk\'y would like to acknowledge
the institutional support RVO 67985840. Radek Erban would also like to 
thank the Royal Society for a University Research Fellowship.
\vskip 5 mm

\appendixtitleon
\appendixtitletocon
\begin{appendices}
\section{$\!\!\!\!\!\!\!$:$\!$ perturbed $x$-factorable transformation} 
\label{app:xfactorable}
\begin{definition} \label{def:xft}
Consider applying an $x$-factorable transformation, as defined in~\cite{Me}, on~(\ref{eq:polynomial}), and then adding to the resulting 
right-hand side a zero-degree term $\varepsilon \mathbf{v}$, with $\varepsilon \ge 0$ and vector $\mathbf{v} = (1,1)^{\top}$, resulting in
\begin{align}
\frac{\mathrm{d} \mathbf{x}}{\mathrm{d} t} & 
= \varepsilon \mathbf{v} + \mathcal{X}(\mathbf{x}) 
\boldsymbol{\mathcal{P}}(\mathbf{x}; \, \mathbf{k})
\, = \, \varepsilon \mathbf{v} + (\Psi_{\mathcal{X}} \boldsymbol{\mathcal{P}})(\mathbf{x}; \, \mathbf{k})
\equiv
(\Psi_{\mathcal{X}_{\varepsilon}} \boldsymbol{\mathcal{P}})(\mathbf{x}; \, \mathbf{k}).
\label{eqn:xft} 
\end{align}
Then $\Psi_{\mathcal{\mathcal{X}_{\varepsilon}}}: \mathbb{P}_{2}(\mathbb{R}^{2}; 
\, \mathbb{R}^{2}) \to \mathbb{P}_{3}(\mathbb{R}^{2}; \, \mathbb{R}^{2})$, 
mapping $\boldsymbol{\mathcal{P}}(\mathbf{x}; \, \mathbf{k})$ to 
$(\Psi_{\mathcal{X}_{\varepsilon}} \boldsymbol{\mathcal{P}})(\mathbf{x}; \, \mathbf{k})$, 
is called a \emph{perturbed $x$-factorable transformation} if $\varepsilon \ne 0$. If $\varepsilon = 0$, the transformation reduces to an (unperturbed) $x$-factorable transformation, $\Psi_{\mathcal{\mathcal{X}}} \equiv \Psi_{\mathcal{X}_{0}}$, defined in~\cite{Me}.
\end{definition}

\vskip 1.9mm

\begin{lemma} \label{lemma:xfact}
$(\Psi_{\mathcal{X}_{\varepsilon}} \boldsymbol{\mathcal{P}})(\mathbf{x}; \, \mathbf{k})$ 
from \emph{Defnition~{\rm \ref{def:xft}}} is a \emph{kinetic function}, i.e. 
$(\Psi_{\mathcal{X}_{\varepsilon}} \boldsymbol{\mathcal{P}})(\mathbf{x}; 
\, \mathbf{k}) \in \mathbb{P}^{\mathcal{K}}_{3}(\mathbb{R}_{\ge}^{2}; \, 
\mathbb{R}^{2})$.
\end{lemma}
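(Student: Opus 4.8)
The plan is to verify, directly and componentwise, the defining monomial condition for a kinetic function on the two entries of $(\Psi_{\mathcal{X}_{\varepsilon}} \boldsymbol{\mathcal{P}})$. Recall that the $x$-factorable transformation multiplies the $i$-th component of $\boldsymbol{\mathcal{P}}$ by $x_i$, so that $\mathcal{X}(\mathbf{x}) = \mathrm{diag}(x_1,x_2)$ and, together with the added term $\varepsilon \mathbf{v}$, one has $(\Psi_{\mathcal{X}_{\varepsilon}} \boldsymbol{\mathcal{P}})_i(\mathbf{x}; \mathbf{k}) = \varepsilon + x_i \, \mathcal{P}_i(\mathbf{x}; \mathbf{k})$ for $i \in \{1,2\}$. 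The characterization I would invoke from~\cite{Me} is that a polynomial $f_i$ qualifies for the $i$-th slot of a kinetic function precisely when every monomial of $f_i$ carrying a negative coefficient is divisible by $x_i$ (equivalently, $f_i|_{x_i = 0}$ has only nonnegative coefficients).

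First I would treat the factored term $x_i \, \mathcal{P}_i$. Since $\mathcal{P}_i$ is a polynomial in $(x_1,x_2)$, multiplying through by $x_i$ sends every monomial of $\mathcal{P}_i$ — including its constant term — to a monomial divisible by $x_i$. Hence every monomial of $x_i \, \mathcal{P}_i$, whatever the sign of its coefficient, is divisible by $x_i$, so the kinetic condition is met for this term irrespective of the signs of the entries of $\mathbf{k}$. This is the essential structural observation: $x$-factorization is exactly the device that renders an arbitrary $\boldsymbol{\mathcal{P}}$ kinetic. It then remains to control the added term $\varepsilon \mathbf{v}$: its $i$-th component contributes the single constant monomial $\varepsilon$, which is the \emph{only} monomial of $\varepsilon + x_i \, \mathcal{P}_i$ not divisible by $x_i$. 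Because $\varepsilon \ge 0$ by hypothesis, this monomial has a nonnegative coefficient, so the full $i$-th component still satisfies the kinetic condition. Thus both components are kinetic, and the assumption $\varepsilon \ge 0$ is used exactly here and nowhere else.

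Finally I would record the degree bound. Each $\mathcal{P}_i$ has degree at most two, so $x_i \, \mathcal{P}_i$ has degree at most three, and adding the constant $\varepsilon$ does not raise the degree; hence $(\Psi_{\mathcal{X}_{\varepsilon}} \boldsymbol{\mathcal{P}}) \in \mathbb{P}_{3}(\mathbb{R}^{2}; \mathbb{R}^{2})$. Combined with the kinetic property established above, this places it in $\mathbb{P}^{\mathcal{K}}_{3}(\mathbb{R}_{\ge}^{2}; \mathbb{R}^{2})$, as claimed. I do not anticipate a genuine obstacle: the only point requiring care is invoking the correct monomial characterization of a kinetic function and observing that $x$-factorization disposes of all monomials of positive degree at once, leaving only the nonnegative constant $\varepsilon$ to be checked.
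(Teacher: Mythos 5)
Your proof is correct and follows essentially the same route as the paper's: decompose $(\Psi_{\mathcal{X}_{\varepsilon}} \boldsymbol{\mathcal{P}})_i = \varepsilon + x_i \, \mathcal{P}_i$, note that the $x$-factored part is kinetic, and observe that adding the nonnegative constant $\varepsilon$ preserves the kinetic property. The only difference is that the paper simply cites~\cite{Me} for the kineticity of $(\Psi_{\mathcal{X}} \boldsymbol{\mathcal{P}})$, whereas you verify it explicitly via the monomial characterization (every negative-coefficient monomial of the $i$-th component divisible by $x_i$); this makes your argument more self-contained but introduces no genuinely new idea.
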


\vskip 1.5mm

\noindent
{\bf Proof.}
$(\Psi_{\mathcal{X}} \boldsymbol{\mathcal{P}})(\mathbf{x}; \, \mathbf{k})$ is a kinetic function~\cite{Me}. Since, from~(\ref{eqn:xft}), $(\Psi_{\mathcal{X}_{\varepsilon}} \boldsymbol{\mathcal{P}})(\mathbf{x}; \, \mathbf{k}) = \varepsilon \mathbf{v} + (\Psi_{\mathcal{X}} \boldsymbol{\mathcal{P}})(\mathbf{x}; \, \mathbf{k})$, with $\varepsilon \ge 0$ and $\mathbf{v} = (1,1)^{\top}$, it follows that $(\Psi_{\mathcal{X}_{\varepsilon}} \boldsymbol{\mathcal{P}})(\mathbf{x}; \, \mathbf{k})$ is kinetic as well.
\hfill \qed 

We now provide a theorem relating location, stability and type of the positive critical points of~(\ref{eq:polynomial}) and~(\ref{eqn:xft}). 
\begin{theorem}\label{theorem:Xfact2D}
\textit{Consider the ODE 
system~$(\ref{eq:polynomial})$ with positive 
critical points $\mathbf{x}^{*} \in \mathbb{R}_{>}^2$. 
Let us assume that $\mathbf{x}^{*} \in \mathbb{R}_{>}^2$ 
is hyperbolic, and is not the degenerate case between 
a node and a focus, i.e. it satisfies the condition
\begin{align}
\left(\textrm{\emph{tr}} \left(\nabla \boldsymbol{\mathcal{P}}(\mathbf{x}^{*}; \, \mathbf{k}) \right) \right)^2 - 4\textrm{\emph{det}} \left(\nabla \boldsymbol{\mathcal{P}}(\mathbf{x}^{*}; \, \mathbf{k}) \right) & \ne 0, \label{eq:degenerate}
\end{align}
as well as conditions \emph{(ii)} and \emph{(iii)} of \emph{Theorem} 
$3.3$ in~{\rm \cite{Me}}. Then positivity, stability 
and type of the critical point $\mathbf{x}^{*} \in \mathbb{R}_{>}^2$ 
are invariant under the perturbed $x$-factorable 
transformations $\Psi_{\mathcal{X}_{\varepsilon}}$, 
for sufficiently small $\varepsilon \ge 0$.
Assume~$(\ref{eq:polynomial})$ does not have boundary 
critical points. Consider the two-dimensional ODE 
system~$(\ref{eqn:xft})$ with $\varepsilon = 0$, 
and with boundary critical points 
denoted $\mathbf{\bar{x}}^{0} \in \mathbb{R}_{\ge}^2$, 
$\mathbf{\bar{x}}^{0} 
= (\bar{x}_{b,1}^0,\bar{x}_{b,2}^0)$, 
$\bar{x}_{b,1}^0 \bar{x}_{b,2}^0 = 0$. Assume that for 
$i \in \{1,2\}$
\begin{align}
\frac{\partial \mathcal{P}_i(\mathbf{\bar{x}}_b^{0}; \, \mathbf{k})}{\partial x_i} & \ne 0, \, \, \, \, \, \, \, \, \, \text{if } \, \, \,  \bar{x}_{b,i}^0 \, \ne \, 0, \label{eq:boundarycondition1}
\end{align}
and that for some $i \in \{1,2\}$ 
\begin{align}
\mathcal{P}_i(\mathbf{\bar{x}}_b^{0}; \, \mathbf{k}) & > 0, \, \, \, \, \, \, \, \, \, \text{if } \, \, \,  \bar{x}_{b,i}^0 \, = \, 0. \label{eq:boundarycondition2}
\end{align}
Then, the critical point 
$\mathbf{\bar{x}}_b^{0} \in \mathbb{R}_{\ge}^2$ of the 
two-dimensional ODE system~$(\ref{eqn:xft})$ with 
$\varepsilon = 0$ becomes the critical point 
$\mathbf{\bar{x}}_b \notin \mathbb{R}_{\ge}^2$ 
of system~$(\ref{eqn:xft})$ for 
sufficiently small $\varepsilon > 0$.
}
\end{theorem}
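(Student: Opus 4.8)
The plan is to treat the two assertions separately, in each case reducing the perturbed map $\Psi_{\mathcal{X}_{\varepsilon}}$ to the unperturbed one at $\varepsilon = 0$ (where I may invoke \cite{Me}) and then propagating to small $\varepsilon > 0$ by the implicit function theorem together with continuity of eigenvalues. Throughout I write $G(\mathbf{x},\varepsilon) = \varepsilon \mathbf{v} + \mathcal{X}(\mathbf{x}) \boldsymbol{\mathcal{P}}(\mathbf{x}; \mathbf{k})$ for the right-hand side of~$(\ref{eqn:xft})$ and note that the Jacobian of a component $F_i = x_i \mathcal{P}_i$ has entries $\partial F_i / \partial x_j = \delta_{ij}\mathcal{P}_i + x_i\, \partial \mathcal{P}_i / \partial x_j$.

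For the first assertion, I would first record that at a positive critical point $\mathbf{x}^{*}$ one has $\boldsymbol{\mathcal{P}}(\mathbf{x}^{*}; \mathbf{k}) = 0$, so $\mathbf{x}^{*}$ stays a critical point of the unperturbed $x$-factorable system $\mathcal{X}(\mathbf{x})\boldsymbol{\mathcal{P}}(\mathbf{x})$, and the diagonal term $\delta_{ij}\mathcal{P}_i$ drops out, leaving the Jacobian equal to $\mathcal{X}(\mathbf{x}^{*})\nabla\boldsymbol{\mathcal{P}}(\mathbf{x}^{*})$. Since $\mathbf{x}^{*} \in \mathbb{R}_{>}^2$ is hyperbolic, $\det\nabla\boldsymbol{\mathcal{P}}(\mathbf{x}^{*}) \neq 0$, hence $\det(\mathcal{X}(\mathbf{x}^{*})\nabla\boldsymbol{\mathcal{P}}(\mathbf{x}^{*})) = x_1^{*} x_2^{*} \det\nabla\boldsymbol{\mathcal{P}}(\mathbf{x}^{*}) \neq 0$; conditions (ii)--(iii) of Theorem 3.3 in \cite{Me} preserve stability (keeping the eigenvalues off the imaginary axis) and type, so the transformed point is again hyperbolic with the same stability and type, settling $\varepsilon = 0$. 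To pass to $\varepsilon > 0$, I would apply the implicit function theorem to $G$ at $(\mathbf{x}^{*},0)$: the partial Jacobian $\partial_{\mathbf{x}} G = \mathcal{X}(\mathbf{x}^{*})\nabla\boldsymbol{\mathcal{P}}(\mathbf{x}^{*})$ is invertible, yielding a unique smooth branch $\mathbf{x}^{*}(\varepsilon)$ with $\mathbf{x}^{*}(0) = \mathbf{x}^{*}$. Continuity of $\mathbf{x}^{*}(\varepsilon)$ preserves positivity for small $\varepsilon$, continuity of the eigenvalues of the Jacobian along the branch preserves stability, and the non-degeneracy hypothesis~$(\ref{eq:degenerate})$, together with the type-preservation supplied by \cite{Me} at $\varepsilon = 0$, keeps the relevant discriminant bounded away from zero, so the point cannot cross the node--focus boundary and its type is preserved.

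For the second assertion, I would analyse the boundary critical points coordinate by coordinate. On an open axis, say $\bar{x}_{b,1}^0 = 0$ and $\bar{x}_{b,2}^0 \neq 0$ (which forces $\mathcal{P}_2(\mathbf{\bar{x}}^0) = 0$), the Jacobian of $\mathcal{X}(\mathbf{x})\boldsymbol{\mathcal{P}}(\mathbf{x})$ at $\mathbf{\bar{x}}^0$ is lower triangular with diagonal $(\mathcal{P}_1(\mathbf{\bar{x}}^0),\, \bar{x}_{b,2}^0\, \partial\mathcal{P}_2/\partial x_2)$, invertible precisely because~$(\ref{eq:boundarycondition2})$ gives $\mathcal{P}_1(\mathbf{\bar{x}}^0) > 0$ and~$(\ref{eq:boundarycondition1})$ gives $\partial\mathcal{P}_2(\mathbf{\bar{x}}^0)/\partial x_2 \neq 0$. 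The implicit function theorem then produces a branch $\mathbf{\bar{x}}(\varepsilon)$, and I would read off its direction from the critical-point equation in the vanishing coordinate: $\varepsilon + x_1 \mathcal{P}_1(\mathbf{x}) = 0$ gives $x_1(\varepsilon) = -\varepsilon/\mathcal{P}_1(\mathbf{\bar{x}}^0) + O(\varepsilon^2) < 0$ for small $\varepsilon > 0$, so the branch leaves $\mathbb{R}_{\ge}^2$.

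The place where I expect the main obstacle is the origin, which is always a boundary critical point of $\mathcal{X}(\mathbf{x})\boldsymbol{\mathcal{P}}(\mathbf{x})$ and at which the Jacobian is $\mathrm{diag}(\mathcal{P}_1(0,0), \mathcal{P}_2(0,0))$; here $(\ref{eq:boundarycondition2})$ only guarantees \emph{one} diagonal entry is positive, so the implicit function theorem may fail. I would therefore replace the IFT step by a direct continuity argument valid at every boundary point: fixing an index $i$ with $\bar{x}_{b,i}^0 = 0$ and $\mathcal{P}_i(\mathbf{\bar{x}}^0) > 0$, any critical point $\mathbf{x}$ of the perturbed system near $\mathbf{\bar{x}}^0$ must satisfy $x_i \mathcal{P}_i(\mathbf{x}) = -\varepsilon$, and since $\mathcal{P}_i(\mathbf{x}) > 0$ by continuity on a neighbourhood of $\mathbf{\bar{x}}^0$, this forces $x_i < 0$, i.e. $\mathbf{\bar{x}}_b \notin \mathbb{R}_{\ge}^2$. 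This sign argument uses only~$(\ref{eq:boundarycondition2})$ and so covers the degenerate origin case uniformly with the axis cases, completing the proof.
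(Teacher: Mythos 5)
Your proposal is correct, and its core is the same perturbative idea as the paper's proof: expand the critical-point equation $\varepsilon\mathbf{v}+\mathcal{X}(\mathbf{\bar{x}})\boldsymbol{\mathcal{P}}(\mathbf{\bar{x}};\,\mathbf{k})=\mathbf{0}$ around $\varepsilon=0$ and read off the sign of the first-order displacement of each vanishing coordinate, which is $-\varepsilon/\mathcal{P}_i(\mathbf{\bar{x}}_b^0;\,\mathbf{k})<0$ exactly as you find. The difference is in execution. The paper posits a regular power-series ansatz $\mathbf{\bar{x}}=\mathbf{\bar{x}}^0+\varepsilon\mathbf{\bar{x}}^1+\mathcal{O}(\varepsilon^2)$ and solves the order-$1$ and order-$\varepsilon$ equations formally, with conditions~(\ref{eq:boundarycondition1})--(\ref{eq:boundarycondition2}) emerging as the requirements for the correction $\mathbf{\bar{x}}^1$ to be well defined; you instead justify the expansion via the implicit function theorem, which is the more rigorous route, and you correctly identify the one place where it can fail: at the origin the Jacobian of the unperturbed $x$-factorable field is $\mathrm{diag}(\mathcal{P}_1(0,0),\mathcal{P}_2(0,0))$, and since~(\ref{eq:boundarycondition2}) only guarantees positivity of one entry (the absence of boundary critical points of~(\ref{eq:polynomial}) only excludes both vanishing simultaneously), this matrix may be singular. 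Your replacement argument --- any critical point near $\mathbf{\bar{x}}_b^0$ satisfies $x_i\mathcal{P}_i(\mathbf{x};\,\mathbf{k})=-\varepsilon$ with $\mathcal{P}_i>0$ on a neighbourhood, forcing $x_i<0$ --- is a genuine improvement: it covers the degenerate origin uniformly and delivers the conclusion that actually matters downstream (no critical point survives in $\mathbb{R}_{\ge}^2$ near the boundary point), whereas the paper's formal expansion tacitly assumes solvability. The only caveat is that at such a degenerate origin your sign argument alone does not establish \emph{existence} of a continued critical point, so the theorem's phrasing ``becomes the critical point $\mathbf{\bar{x}}_b$'' should there be read as the exclusion statement; this is a defect of the statement rather than of your proof. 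Your treatment of the first assertion (invertibility of $\mathcal{X}(\mathbf{x}^{*})\nabla\boldsymbol{\mathcal{P}}(\mathbf{x}^{*};\,\mathbf{k})$ from hyperbolicity, continuity of eigenvalues for stability, and condition~(\ref{eq:degenerate}) together with type preservation at $\varepsilon=0$ to keep the discriminant away from zero) matches what the paper leaves implicit.
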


\begin{proof}
The critical points of~(\ref{eqn:xft}) are solutions of the following regularly perturbed algebraic equation
\begin{align}
\varepsilon \mathbf{v} + \mathcal{X}(\mathbf{\bar{x}}) 
\boldsymbol{\mathcal{P}}(\mathbf{\bar{x}}; \, \mathbf{k}) & = \mathbf{0}. \label{eq:criticalpoints}
\end{align}
Let us assume $\mathbf{\bar{x}}$ can be written as the power series
\begin{align}
\mathbf{\bar{x}} & = \mathbf{\bar{x}}^0  + \varepsilon \mathbf{\bar{x}}^1 + \mathcal{O}(\varepsilon^2), \label{eq:expansion}
\end{align}
where $\mathbf{\bar{x}}^0 \in \mathbb{R}_{\ge}^{2}$ are the critical points of~(\ref{eqn:xft}) with $\varepsilon = 0$. Substituting the power series~(\ref{eq:expansion}) into~(\ref{eq:criticalpoints}), and using the Taylor series theorem on $\boldsymbol{\mathcal{P}}(\mathbf{\bar{x}}; \, \mathbf{k})$, so that $\boldsymbol{\mathcal{P}}(\mathbf{\bar{x}}^0  + \varepsilon \mathbf{\bar{x}}^1 + \mathcal{O}(\varepsilon^2); \, \mathbf{k}) = \boldsymbol{\mathcal{P}}(\mathbf{\bar{x}}^0; \, \mathbf{k}) + \varepsilon \nabla \boldsymbol{\mathcal{P}}(\mathbf{\bar{x}}^0; \, \mathbf{k})\mathbf{\bar{x}}^1 + \mathcal{O}(\varepsilon^2)$, as well as that $\mathcal{X}(\mathbf{\bar{x}}) = \mathcal{X}(\mathbf{\bar{x}}^0) +  \varepsilon \mathcal{X}(\mathbf{\bar{x}}^1) + \mathcal{O}(\varepsilon^2)$, and equating terms of equal powers in $\varepsilon$, the following system of polynomial equations is obtained:
\begin{align}
\mathcal{O} \left(1 \right): & \;  \mathcal{X}(\mathbf{\bar{x}}^0) \boldsymbol{\mathcal{P}}(\mathbf{\bar{x}}^0; \, \mathbf{k}) \, = 0, \nonumber \\
\mathcal{O}(\varepsilon): & \;  \mathcal{X}(\mathbf{\bar{x}}^0) \nabla \boldsymbol{\mathcal{P}}(\mathbf{\bar{x}}^0; \, \mathbf{k}) \mathbf{\bar{x}}^1 +  \mathcal{X}(\mathbf{\bar{x}}^1) \boldsymbol{\mathcal{P}}(\mathbf{\bar{x}}^0; \, \mathbf{k}) \, = - \mathbf{v}. \label{eq:perturbationsystem}
\end{align}

\emph{Order $1$ equation}. The positive critical points $\mathbf{\bar{x}}^0 \in \mathbb{R}_{>}^{2}$ satisfy $\boldsymbol{\mathcal{P}}(\mathbf{\bar{x}}^0; \, \mathbf{k}) = \mathbf{0}$. Since $\boldsymbol{\mathcal{P}}(\mathbf{x}; \, \mathbf{k})$ has no boundary critical points by assumption, critical points $\mathbf{\bar{x}}_b^0 \in \mathbb{R}_{\ge}^{2}$ with $\bar{x}_{b,i}^0 = 0$, $\bar{x}_{b,j}^0 \ne 0$, $\bar{x}_{b,1}^0 \bar{x}_{b,2}^0 = 0$, $i,j \in \{1,2\}$, satisfy $\mathcal{P}_{i}(\mathbf{\bar{x}}_b^0; \, \mathbf{k}) \ne 0$, $\mathcal{P}_{j}(\mathbf{\bar{x}}_b^0; \, \mathbf{k}) = 0$.

\emph{Order $\varepsilon$ equation}. Vector $\mathbf{\bar{x}}^1$, corresponding to a positive $\mathbf{\bar{x}}^0$, satisfies
\begin{align}
\mathcal{X}(\mathbf{\bar{x}}^0) \nabla \boldsymbol{\mathcal{P}}(\mathbf{\bar{x}}^0; \, \mathbf{k}) \mathbf{\bar{x}}^1  & = - \mathbf{v}, \nonumber
\end{align}
which can be solved provided $\mathbf{\bar{x}}^0$ is a hyperbolic critical point. Vector $\mathbf{\bar{x}}_b^1$, corresponding to a nonnegative $\mathbf{\bar{x}}_b^0$, is given by
\begin{align}
\bar{x}_{b,i}^1 = \begin{cases}
 -(\mathcal{P}_{i}(\mathbf{\bar{x}}_b^0; \, \mathbf{k}))^{-1}, & \textrm{if } \bar{x}_{b,i}^0 = 0,\\
(\frac{\partial \mathcal{P}_{i}(\mathbf{\bar{x}}_b^0; \, \mathbf{k})}{\partial x_i})^{-1} \left( (\mathcal{P}_{j}(\mathbf{\bar{x}}_b^0; \, \mathbf{k}))^{-1} \frac{\partial \mathcal{P}_{i}(\mathbf{\bar{x}}_b^0; \, \mathbf{k})}{\partial x_j} - (\bar{x}_{b,i}^0)^{-1}\right), &  \textrm{if } \bar{x}_{b,i}^0 \ne 0, \nonumber
\end{cases}
\end{align}
from which conditions~(\ref{eq:boundarycondition1}) and~(\ref{eq:boundarycondition2}) follow.
\end{proof}

\section{$\!\!\!\!\!\!\!$:$\!$ bicyclic system with large attractors} \label{app:constructions3}
Consider the following deterministic kinetic equations
\begin{align}
\frac{\mathrm{d} x_1}{\mathrm{d} t} 
& = k_1 + x_1 (- k_{2} + k_{3} x_1 + k_{4} x_2 - k_{5} x_1 x_2), 
\nonumber \\
\frac{\mathrm{d} x_2}{\mathrm{d} t} 
& = k_6 + x_2 (k_{7} - k_{8} x_1 + k_{9} x_2 + k_{10}x_1^2 - k_{11} x_2^2), \label{eq:bicyclicXT2}
\end{align}
with the coefficients 
$\mathbf{k}$ given by
\begin{align}
k_1 & = 10^{-3}, \; \; \; 
k_2 = 10, \; \; \; 
k_3 = 1, \; \; \; 
k_4 = 1, \; \; \; 
k_5 = 0.1, \; \; \; 
k_6 = 10^{-3}, \nonumber \\
k_7 & = 3.7, \; \; \; 
k_8 = 1.9, \; \; \; 
k_9 = 1.01, \; \; \; 
k_{10} = 0.1, \; \; \; 
k_{11} = 0.05.
\label{eq:bicyclicexample2}
\end{align}
The canonical reaction network induced by system~(\ref{eq:bicyclicXT2}), involving two species $s_1$ and $s_2$ and eleven 
reactions $r_1, r_2, \ldots, r_{11}$ under mass-action kinetics, is given by
\begin{align}
& r_1: \;  & \varnothing  &
\xrightarrow[]{ k_1 } s_1 ,  
\; \; \; \; \; \; \; \; \; \; \; \; 
\; \; \; \; \; \; \; \; \; \; \; \; \; \; \;  
\; \; \; \; \; \; \; \, r_6:  
& 
\varnothing  &\xrightarrow[]{ k_6 } s_2,  \nonumber \\
& r_2: \;  & s_1  &
\xrightarrow[]{ k_2 } \varnothing ,  \; \; \; \; \; \; \; \; \; \; \; \; 
\; \; \; \; \; \; \; \; \; \; \; \; \; \; \;  \; \; \; \; \; \; \; \, \, r_7:  
& 
s_2  &\xrightarrow[]{ k_7 } 2 s_2,  \nonumber \\
& r_3: \;  & 2 s_1  &\xrightarrow[]{ k_3 } 3 s_1, \; \; \; \; \;  
\; \; \; \; \; \; \; \; \; \; \; \; \;  \; \; \; \; \;  
\; \; \; \; \; \; \; \; \;    
\, r_8: & s_1 + s_2  &\xrightarrow[]{ k_8 } s_1, \nonumber \\
& r_4: \;  & s_1 + s_2  &\xrightarrow[]{ k_4 } 2 s_1 + s_2,  \; \; \; \; \; \; \; 
\; \; \; \; \; \; \; \; \; \; \; \; 
\; \; \; \; \; \; r_{9}:  
&
2 s_2  &\xrightarrow[]{ k_{9} } 3 s_2, \nonumber \\
& r_5: \;  
& 
2 s_1 + s_2  &\xrightarrow[]{ k_5 } s_1 + s_2, \; \; \; \; 
\; \; \; \; \; \; \; \;  \; \; \; \; \;  \; \; \; \; \; \; \; \; \,  
r_{10}: 
& 2 s_1 + s_2 &\xrightarrow[]{ k_{10} } 2 s_1 + 2 s_2, \nonumber \\
& & & \; \; \; \; 
\; \; \; \; \; \; \; \;  \; \; \; \; \;  \; \; \; \; \; \; \;
\; \; \; \; \; \; \; \;  \; \; \; \; \;  \; \; \; \; \; \; \; \;
 r_{11}: & 3 s_2  &\xrightarrow[]{ k_{11} } 2 s_2.  \label{eq:bicyclicXT2net2}
\end{align} 
In Figure~(\ref{fig:fig5})(a), we show the two stable limit cycles obtained by numerically solving~(\ref{eq:bicyclicXT2}) with parameters~(\ref{eq:bicyclicexample2}). In Figure~(\ref{fig:fig5})(b), in addition to the limit cycles, we also show in blue a representative sample path obtained by applying the Gillespie algorithm on~(\ref{eq:bicyclicXT2net2}). Let us note that~(\ref{eq:bicyclicXT2}) was constructed in a similar fashion as system~(\ref{eq:bicyclicXT}) in Section~\ref{sec:constructions2}, using the results from~\cite{Perko2,Tung}.

\begin{figure}[tb]
\centerline{
\hskip 0mm
\includegraphics[width=0.5\columnwidth]{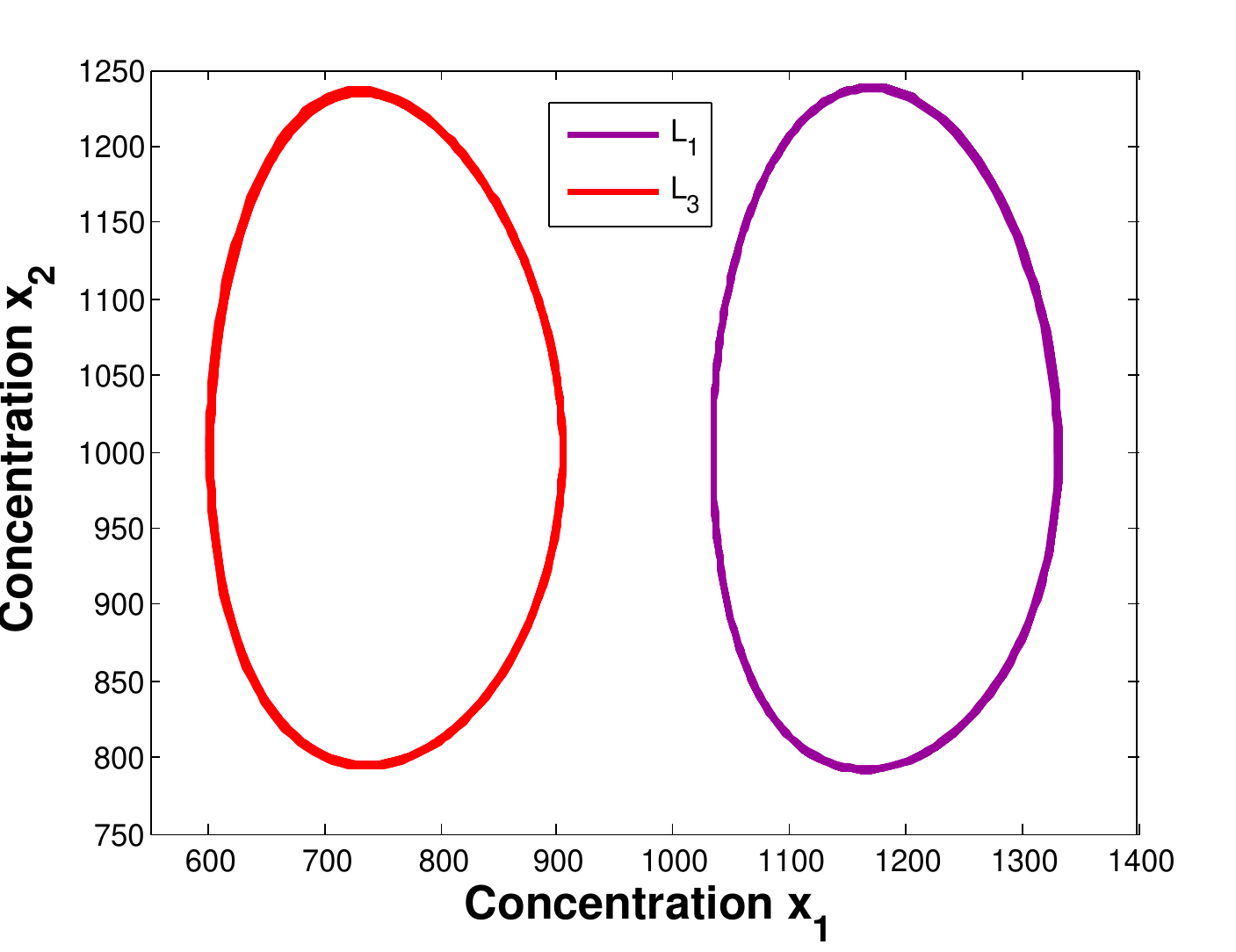}
\hskip 1mm
\includegraphics[width=0.5\columnwidth]{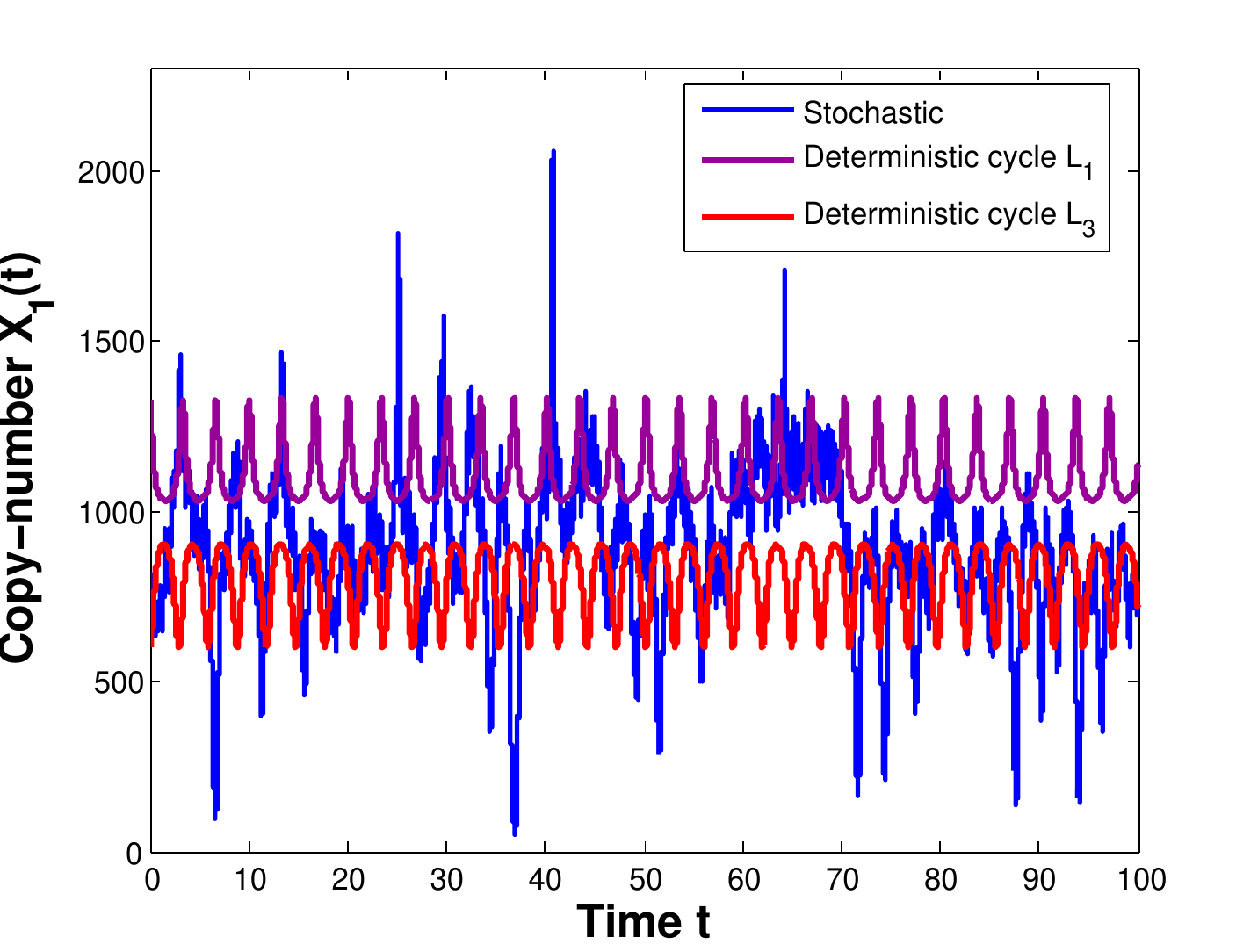}
}
\vskip -5.1cm
\leftline{\hskip -0.3cm (a) \hskip 5.9cm (b)}
\vskip 4.3cm
\caption{ 
{\it Panel {\rm (a)} displays numerically approximated 
stable limit cycles $L_1$ and $L_3$ in the state-space of system~$(\ref{eq:bicyclicXT2})$,
with parameters~$(\ref{eq:bicyclicexample2})$ and reactor volume $V = 100$.
Panel {\rm (b)} displays in blue a representative sample path, 
generated by applying the Gillespie algorithm on the underlying
reaction network~$(\ref{eq:bicyclicXT2net2})$ for the same parameters as in panel {\rm (a)}.
Also shown are two deterministic trajectories, one initiated 
near the limit cycle $L_1$, while the other near $L_3$. One can observe
that the stochastic sample path switches between the two deterministic attractors. 
}
}
\label{fig:fig5}
\end{figure}

\end{appendices}

{\footnotesize

}

\end{document}